\newtheorem{theorem}{Theorem}[section]
\newtheorem{proof}{Proof}[section]
\begin{document}

\preprint{AIP/123-QED}

\title[Sample title]{Forecasting the daily and cumulative number of cases for the COVID-19 pandemic in India}
\author{Subhas Khajanchi}%
\email{subhas.maths@presiuniv.ac.in}
\affiliation{Department of Mathematics, Presidency University, 86/1 College Street, Kolkata 700073, India.}

\author{Kankan Sarkar}
 \email{kankan.math@gmail.com}
\affiliation{Department of Mathematics, Malda College, Malda, West Bengal 732101, India
}%

\date{\today}

\begin{abstract}
The ongoing novel coronavirus epidemic has been announced a pandemic by the World Health Organization on March 11, 2020, and the Govt. of India has declared a nationwide lockdown from March 25, 2020, to prevent community transmission of COVID-19. Due to absence of specific antivirals  or vaccine, mathematical modeling play an important role to better understand the disease dynamics and designing strategies to control rapidly spreading infectious diseases. In our study, we developed a new compartmental model that explains the transmission dynamics of COVID-19. We calibrated our proposed model with daily COVID-19 data for the four Indian provinces, namely Jharkhand, Gujarat, Andhra Pradesh, and Chandigarh. We study the qualitative properties of the model including feasible equilibria and their stability with respect to the basic reproduction number $\mathcal{R}_0$.  The disease-free equilibrium becomes stable and the endemic equilibrium becomes unstable when the recovery rate of infected individuals increased but if the disease transmission rate remains higher then the endemic equilibrium always remain stable. For the estimated model parameters, $\mathcal{R}_0 >1$ for all the four provinces, which suggests the significant outbreak of COVID-19.  Short-time prediction shows the increasing trend of daily and cumulative cases of COVID-19 for the four provinces of India.
\end{abstract}

\pacs{05.45.-a, 03.50.Kk, 03.65.-w}
\maketitle

\textbf{In India, 173,763 confirmed cases, 7,964 confirmed new cases and 4,971 confirmed deaths due to COVID-19 were reported as of May 30, 2020. As the ongoing COVID-19 outbreak is quickly spreading throughout India and globe, short-term modeling predictions give time-critical statistics for decisions on containment and mitigation policies. A big problem for short-term prediction is the evaluation of important parameters and how they alter when the first interventions reveal an effect. In the absence of any therapeutics or licensed vaccine, antivirals, isolation of populations diagnosed with COVID-19 and quarantine of populations feared exposed to COVID-19 were used to control the rapid spread of infection. During this alarming situation, forecasting is of utmost priority for health care planning and to control the SARS-CoV-2 virus with limited resources. We proposed a mathematical model that monitors the dynamics of six compartments, namely susceptible (S), asymptomatic (A), reported symptomatic (I), unreported symptomatic (U), quarantine (Q), and recovered (R) individuals, collectively termed SAIUQR, that predicts the course of the epidemic. Our SAIUQR model discriminates between reported and unreported infected individuals, which is important as the former are typically isolated and hence less likely to spread the infection. A detailed theoretical analysis has been done for our SAIUQR model in terms of the basic reproduction number $\mathcal{R}_0$. All the analytical findings are verified numerically for the estimated model parameters. We have calibrated our SAIUQR model with real observed data on the COVID-19 outbreak in the four provinces of India. The basic reproduction number for all the provinces is greater than unity, which resulted in a substantial outbreak of COVID-19. Based on the simulation, our SAIUQR model predict that on June 13, 2020, the daily new COVID-19 cases will be around 15, 454, 12, 96, and cumulative number of COVID-19 cases will be around 661, 23955, 514, 4487, in Jharkhand, Gujarat, Chandigarh and Andhra Pradesh, respectively.}

\section{Introduction}
\label{Intro}

After a novel strain of COVID-19, was detected in Wuhan,  the city of Hubei province, China, in December 2019 \cite{Wu20}, an exponentially increasing number of patients in mainland China were identified with SARS-CoV-2, immediately the Chinese Health authorities to initiate radical measures to control the epidemic coronavirus. In spite of these radical measures, a SARS-CoV-2 coronavirus pandemic ensued in the subsequent months and China became the epicenter. SARS-CoV-2 viruses are enveloped non-segmented positive-sense RNA viruses that belong to the Coronaviridae family and the order Nidovirales, and are extensively disseminated among humans as well as mammals \cite{Chen20}. The COVID-19 is responsible for a range of symptoms together with fever, dry cough, breathing difficulties, fatigue and lung infiltration in severe cases, similar to those created by SARS-CoV (severe acute respiratory syndrome coronavirus) and MERS-CoV (Middle East respiratory syndrome coronavirus) infections \cite{Huang20}. SARS-CoV-2 has already crossed the earlier history of two coronavirus epidemics SARS-CoV and MERS-CoV, posing the substantial threat to the world people health problem as well as economical problem after the second world war \cite{bbc1}.
According to the World Health Organization report, dated 29 May 2020 reported 5,596,550 total cases and 353,373 deaths worldwide \cite{WHO129}.

Till date, there is no licensed vaccine, drugs and effective therapeutics available for SARS-CoV-2 or COVID-19. Due to absence of pharmaceutical interventions, Govt. of various countries are adopting different strategies to control the outbreak and the most common one is the nation-wide lockdown. It was started with the local Govt. of Wuhan by temporarily prevention of all public traffics within the city on January 23, 2020 and soon followed by other cities in Hubei province \cite{Lina20}. In the absence of drug or specific antivirals for SARS-CoV-2 viruses, maintaining social distancing is the only way to mitigate the human-to-human transmission for coronavirus diseases, and thus the other countries also incorporated the strict lockdown, quarantines and curfews.

In India, the first coronavirus case was reported in Kerala's Thrissur district on January 30, 2020 when a student returned back from Wuhan, the sprawling capital of China's Hubei province \cite{IndCov}. The Govt. of India has implemented a complete nationwide lockdown throughout the country on and from March 25, 2020 for 21 days, following one day `Janata Curfew' on March 22, 2020 to control the coronavirus or SARS-CoV-2 pandemic in India \cite{Pulla20}. Due to massive spread of coronavirus diseases, the Govt. of India has extended the lockdown and it is going on Phase 4; from May 18, 2020 to May 31, 2020. Besides the implementation of nationwide lockdown, the Ministry of Health and Family Welfare (MOHFW) of India, recommended different individual hygiene measures, for example, frequent hand washing, social distancing, use of mask, avoid gathering and touching eyes, mouth and nose etc. \cite{mohfw}.

The Govt. also ceaselessly using different media and social network to aware the public regarding coronavirus diseases and its precautions. Albeit, the factors such as diverse and huge population, the unavailability of specific therapeutics, drugs or licensed vaccines, inadequate evidences regarding the mechanism of disease transmission make it strenuous to combat against the coronavirus diseases throughout India. To control the transmission of COVID-19, lockdown is a magnificent measure but testing is also an important factor to identify the symptomatic and asymptomatic individuals. The symptomatic individuals should be reported by the public health agencies to separate them from the uninfected or asymptomatic individuals for their ICU (Intensive Care Unit) treatment. Also, from an economic viewpoint, the strict lockdown may be the cause of a substantial financial crisis in near future. In particular, the lockdown in high dense countries can mitigate the disease transmission rate, although entirely control not be obtainable. Thus, to survive the economical status of a country, a strict lockdown for a long period is not advisable at all in any situations. Hence there should be a acceptable balance between the two different characteristics of governmental strategies: strict lockdown and healthy economical situations. Albeit, few questions remains to answer whether this cluster containment policy can be effective in mitigating SARS-CoV-2 transmission or not ? If not then what can be the possible solutions to mitigate the transmission of SARS-CoV-2 viruses ? These questions can only be replied by investigating the dynamics and forecasting of a mechanistic compartmental model for SARS-CoV-2 transmission and comparing the outcomes with real scenarios.

A plenty of mathematical models has been investigated to study the transmission dynamics and forecasting of COVID-19 outbreak \cite{Kucharski20,Tang20,WuJT20,Fanelli20,Ribeiro20,Chakraborty20,KhCov20,Sarkar20,He20,Anastassopoulou,Giordano20}. Kucharski and colleagues \cite{Kucharski20} performed a model-based analysis for SARS-CoV-2 viruses and calculate the reproduction number $\mathcal{R}_0 = 2.35$, where the authors have taken into account all the positive cases of Wuhan, China, till March 05, 2020.  Wu and colleagues \cite{WuJT20} studied a susceptible-exposed-infectious-recovered (SEIR) model to simulate the epidemic in Wuhan city and compute the basic reproduction number $\mathcal{R}_0 = 2.68$ and predict their model based on the data recorded from December 31, 2019 to January 28, 2020. Tang and colleagues \cite{Tang20} developed a compartmental model to study the transmission dynamics of COVID-19 and calculate the basic reproduction number $\mathcal{R}_0 = 6.47$, which is very high for the infectious diseases. Recently, Fanelli \& Piazza \cite{Fanelli20} analyzed and predicted the characteristics of SARS-CoV-2 viruses in the three mostly affected countries till March 2020 with an aid of the mathematical modeling. Stochastic based regression model also been studied by Ribeiro and Colleagues \cite{Ribeiro20} to predict the scenarios of the most affected states of Brazil. Chakraborty \& Ghosh \cite{Chakraborty20} investigate a hybrid ARIMA-WBF model to predict the various SARS-CoV-2 affected countries throughout the world. Khajanchi and Colleagues \cite{KhCov20} developed a compartmental model to forecast and control of the outbreak of COVID-19 in the four states of India and the overall India. Sarkar \& Khajanchi \cite{Sarkar20} developed a mathematical model to study the model dynamics and forecast the SARS-CoV-2 viruses in seventeenth provinces of India and the overall India. A discrete-time SIR model introducing dead compartment system studied by Anastassopoulou et al. \cite{Anastassopoulou} to portray the dynamics of SARS-CoV-2 outbreak. Giordano and colleagues \cite{Giordano20} established a new mathematical model for COVID-19 pandemic and predict that restrictive social distancing can mitigate the widespread of COVID-19 among the human. A couple of seminal papers has been investigated to study the transmission dynamics of COVID-19 or SARS-CoV-2 viruses for different countries, including Mexico city, Chicago and Wuhan, the sprawling capital of Central China's Hubei province \cite{Xiao-Lin20,Ndairou20,Mena20,Wong20}. Short-term prediction is too important as it gives time-critical information for decisions on containment and mitigation strategies \cite{KhCov20,Dehning20}. A major problem for short-term predictions is the evaluation of important epidemiological parameters and how they alter when first interventions reveal an effect.

The main objective of this work is to develop a new mathematical model that describes the transmission dynamics and forecasting of COVID-19 or SARS-CoV-2 pandemic in the four different provinces of India, namely Jharkahnd, Andhra Pradesh, Chandigarh, and Gujarat. We estimated the model parameters of four different states of India and fitted our compartmental model to the daily confirmed cases and cumulative confirmed cases reported between March 15, 2020 to May 24, 2020. We compute the basic reproduction number $\mathcal{R}_0$ for the four different states based on the estimated parameter values. We also perform the short-term predictions of the four different states of India from May 25, 2020 to June 13, 2020, and it shows the increasing trends of COVID-19 pandemic in four different provinces of India.

The remaining part of this manuscript has been organized in the following way. In the Section 2, we describe the formulation of the compartmental model for COVID-19 and its basic assumptions. Section 3 describes the theoretical analysis of the model, which incorporates the positivity and boundedness of the system, computation of the basic reproduction number $\mathcal{R}_0$, existence of the biologically feasible singular points and their local stability analysis. In the same section, we perform the global stability analysis for the infection free equilibrium point $E^0$ and the existence of transcritical bifurcation at the threshold $\mathcal{R}_0 = 1.$ In the Section 4, we conduct some model simulations to validate our analytical findings by using the estimated model parameters for Jharkhand, the state of India. The parameters are estimated for the real world example on COVID-19 for four different states of India and perform a short-term prediction based on the estimated parameter values. A discussion in the Section 5 concludes the manuscript.

\section{Mathematical model}
\label{sec:1}

A compartmental mathematical model has been developed to study the transmission dynamics of COVID-19 outbreak in India and throughout the world. We adopt a variant that focuses some important epidemiological properties of COVID-19 or SARS-CoV-2 coronavirus diseases. Based on the health status, we stratify the total human population into six compartments, namely susceptible or uninfected $(S)$, asymptomatic or pauci-symptomatic infected $(A)$, symptomatic reported infected $(I)$, unreported infected $(U)$, quarantine $(Q)$, and recovered $(R)$ individuals, collectively termed SAIUQR. At any instant of time, the total population is denoted by $N = S + A + I + U + Q + R.$ Depending on the six state variables, we aim to develop an autonomous system using first order nonlinear ordinary differential equations.

In the model formulation, quarantine refers to the separation of coronavirus infected population from the general population when the individuals are infected but clinical symptoms has not yet developed, whereas isolation refers to the separation of coronavirus infected population when the population already identified the clinical symptoms. Our mathematical model introduces some demographic effects by assuming a proportional natural mortality rate $\delta > 0$ in each of the six compartments. In addition, our model incorporates a constant recruitment of susceptible populations into the region at the rate $\Lambda_s$ per unit time. This parameter represents new birth, immigration and emigration. The parameter $\beta_s$ represents the probability of disease transmission rate. However the disease transmission from vulnerable to infected individuals (for our model, the class is A) depend on various factors, namely safeguard precautions (use of mask, social distancing, etc.) and hygienic safeguard (use of hand sanitizer) taken by the susceptible individuals as well as infected population. In our model formulation, we incorporate the asymptomatic or pauci-symptomatic infected (undetected) individuals, which is important to better understand the transmission dynamics of COVID-19, which also studied by Giordano et al. \cite{Giordano20} and Xiao-Lin et al. \cite{Xiao-Lin20}.

In our model formulation, we assumed that the COVID-19 virus is spreading when a vulnerable person come into contact with an asymptomatic infected individuals. The uninfected individuals decreases after infection, obtained through interplays among a susceptible population and an infected individuals who may be asymptomatic, reported symptomatic and unreported symptomatic. For these three compartments of infected population, the transmission coefficients are $\beta_s\alpha_a$, $\beta_s\alpha_i$, and $\beta_s\alpha_u$ respectively. We consider $\beta_s$ as the disease transmission rate along with the adjustment factors for asymptomatic $\alpha_a$, reported symptomatic $\alpha_i$ and unreported symptomatic $\alpha_u$ individuals. The interplays among infected populations (asymptomatic, reported symptomatic, and unreported symptomatic) and susceptible individuals can be modeled in the form of total individuals using standard mixing incidence \cite{Anderson91,Diekmann00,Hethcote00,Gumel04}.

The quarantined population can either move to the susceptible or infected compartment (reported and unreported), depending on whether they are infected or not \cite{Keeling08}, with a portion $\rho_s$. Here, $\gamma_q$ is the rate at which the quarantined uninfected contacts were released into the wider community. Asymptomatic individuals were exposed to the virus but clinical symptoms of SARS-CoV-2 viruses has not yet developed. The asymptomatic individuals decreases due to contact with reported and unreported symptomatic individuals at the rate $\gamma_a$ with a portion $\theta \in (0, 1)$, and become quarantine at the rate $\xi_a$. Also, the asymptomatic individuals become recovered at the rate $\eta_a$ and has a natural mortality rate $\delta$. A fraction of quarantine individuals become reported infected individuals at the rate $\gamma_q$ with a portion $\rho_s$ (where $\rho_s \in (0, 1))$.

As we know that an individual is whether infected by coronavirus diseases or not can be identified by RT-PCR screening test and a person with negative results with the RT-PCR screening test may yet be coronavirus positive as it may take around 7-21 days to express the coronavirus symptoms \cite{Lan20}. Thus, a fraction of coronavirus positive class can be considered as reported symptomatic individuals $(\theta)$ and unreported symptomatic individuals $(1-\theta)$.  The reported symptomatic individuals separated from the general populations and move to the isolated class or hospitalized class for clinical treatment.

Also, it can be noticed that once an individual recovered from the SARS-CoV-2 diseases has a very little chance to become infected again for the same disease \cite{Lan20}. Therefore, we assume that none of the recovered individuals move to the susceptible or uninfected class again. In our mathematical model formulation, we assume that the reported infected individuals $(I)$ are unable to spread or transmit the viruses as they are kept completely isolated from the susceptible or uninfected individuals. As the reported infected individuals are moved to the hospital or Intensive Care Unit (ICU) \cite{Liu20}. For our modeling perspective, we are mainly interested in predictions over a relatively short time window within which the temporary immunity is likely still to be in placed, and the possibility of reinfection would negligibly affect the total number of uninfected populations and so there would be no considerable difference in the evolution of the epidemic curves we consider. Social mixing patterns are introduced into our contagion parameters in an average fashion over the entire individuals, irrespective of age. Based on these biological assumptions, we develop the following mathematical model using a system of nonlinear ordinary differential equations to study the outbreak of COVID-19 or SARS-CoV-2 coronavirus diseases:
\begin{widetext}
\begin{eqnarray}
\label{stateeq}
\left\{
  \begin{array}{ll}
    S'(t) = \Lambda_{s} - \beta_s S \bigg( \alpha_a \frac{A}{N} + \alpha_i \frac{I}{N} + \alpha_u \frac{U}{N} \bigg) + \rho_{s} \gamma_q Q - \delta S,  \\
    A'(t) = \beta_s S \bigg( \alpha_a \frac{A}{N} + \alpha_i \frac{I}{N} + \alpha_u \frac{U}{N} \bigg) - (\xi_a + \gamma_a) A - \eta_a A - \delta A, \\
    I'(t) = \theta \gamma_a A + (1-\rho_s) \gamma_q Q - \eta_i I - \delta I, \\
    U'(t) = (1 - \theta) \gamma_{a} A - \eta_u U - \delta U, \\
    Q'(t) = \xi_a A - \gamma_q Q - \delta Q,  \\
    R'(t) = \eta_u U + \eta_{i} I + \eta_{a} A - \delta R,
     \end{array}
\right.
\end{eqnarray}
\end{widetext}

\noindent the model is supplemented by the following initial values:
\begin{eqnarray}
\label{IC}
\nonumber
S(t_0) &=& S_0 \geq 0,~A(t_0) = A_{0} \geq 0,~Q(t_0) = Q_0 \geq 0,\\
I(t_0) &=& I_0 \geq 0,~U(t_0) = U_{0} \geq 0,~R(t_0) = R_0 \geq 0.
\end{eqnarray}

In our model $t \geq t_0$ is the time in days, $t_0$ represents the starting date of the outbreak for our system (\ref{stateeq}).  The transmission dynamics of the COVID-19 is illustrated in the Figure \ref{Schema}. The description of the model parameters are presented in the Table \ref{parval}.


\begin{figure}
\includegraphics[width=0.45\textwidth]{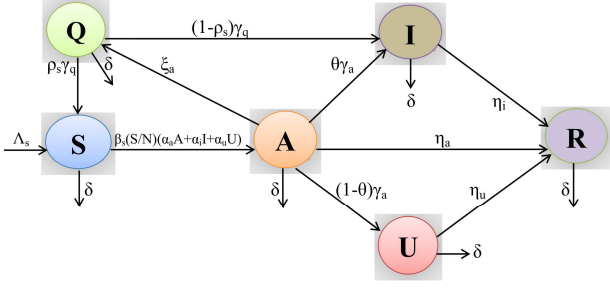}
\caption{ A schematic representation of the mechanistic SAIUQR model for the transmission dynamics of COVID-19 or SARS-CoV-2. The interaction among different stages of individuals is shown in the graphical scheme: S, susceptible or uninfected population; A, asymptomatic infected population; I, COVID-19 reported symptomatic infected individuals; U, COVID-19 unreported symptomatic infected individuals; Q, quarantine individuals; and R, COVID-19 recovered individuals. Biological interpretations of the model parameters are given in the Table \ref{parval}.}
\label{Schema}
\end{figure}

\section{SAIUQR model analysis}

In this section, we provide the basic properties of the SAIUQR model (\ref{stateeq}), including positivity and boundedness of the solutions, basic reproduction number and the biologically feasible singular points and their stability analysis, subject to the non-negative initial values $(S_0, A_0, Q_0, I_0, U_0, Q_0, R_0) \in \mathbf{R}^{6}.$

\begin{theorem}
\label{Postivity}	
The solutions of the SAIUQR system (\ref{stateeq}) with the initial values (\ref{IC}) are defined on $\mathbf{R}_{+}^{6}$ remain positive for all $t > 0$.
\end{theorem}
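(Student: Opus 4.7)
The plan is to split the argument into two stages: (i) local existence and uniqueness of solutions, and (ii) forward invariance of the non-negative orthant $\mathbf{R}_+^6$. The right-hand side of (\ref{stateeq}) is $C^1$ away from the hypersurface $\{N=0\}$, so Picard--Lindel\"of produces a unique local solution from any initial datum satisfying (\ref{IC}) as soon as $N(t_0)>0$. To keep the denominators in the incidence terms under control throughout, I would first sum the six equations of (\ref{stateeq}); the internal transfers telescope and yield the linear balance
\begin{equation*}
N'(t) = \Lambda_s - \delta N(t),
\end{equation*}
whose explicit solution $N(t) = \Lambda_s/\delta + (N(t_0)-\Lambda_s/\delta)e^{-\delta(t-t_0)}$ is strictly positive for every $t \geq t_0$ when $\Lambda_s>0$ and $N(t_0)\geq 0$, so the ratios $A/N$, $I/N$, $U/N$ remain well-defined along the entire maximal interval.

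Next, I would verify quasi-positivity of the vector field via the classical subtangent (Nagumo) test. For each state variable, inspecting the right-hand side of (\ref{stateeq}) on the boundary face where that variable vanishes gives
\begin{align*}
S'\big|_{S=0} &= \Lambda_s + \rho_s\gamma_q Q \geq 0, \\
A'\big|_{A=0} &= \beta_s S(\alpha_i I + \alpha_u U)/N \geq 0, \\
I'\big|_{I=0} &= \theta\gamma_a A + (1-\rho_s)\gamma_q Q \geq 0, \\
U'\big|_{U=0} &= (1-\theta)\gamma_a A \geq 0, \\
Q'\big|_{Q=0} &= \xi_a A \geq 0, \\
R'\big|_{R=0} &= \eta_u U + \eta_i I + \eta_a A \geq 0,
\end{align*}
each non-negativity holding whenever the remaining components lie in $\mathbf{R}_+$. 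Nagumo's invariance theorem (equivalently, a contradiction argument at the first time some component attempts to exit the orthant) then ensures that $\mathbf{R}_+^6$ is forward invariant. As a convenient alternative, each equation can be recast as $X'(t) + g_X(t)\,X(t) = f_X(t)$ with $f_X\geq 0$ and locally bounded $g_X$, so that the integrating-factor identity
\begin{equation*}
X(t) = X(t_0)\,e^{-\int_{t_0}^t g_X(s)\,ds} + \int_{t_0}^t e^{-\int_s^t g_X(\tau)\,d\tau}\,f_X(s)\,ds
\end{equation*}
manifestly yields $X(t)\geq 0$ whenever $X(t_0)\geq 0$.

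The only delicate point is the logical ordering of the two stages: strict positivity of $N(t)$ must be secured \emph{before} the Nagumo test is invoked on the $A$-equation, because the vector field is singular at $N=0$ and the face $\{A=0\}$ is the only one where the derivative depends non-trivially on $N$ in the denominator. Once the total-population ODE is in hand this obstacle disappears, and the remainder of the argument reduces to the routine face-by-face sign check exhibited above.
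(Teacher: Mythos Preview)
Your proof is correct and follows essentially the same route as the paper: the core argument in both is the face-by-face verification that the vector field points inward on each boundary hyperplane of $\mathbf{R}_+^6$, with exactly the same six inequalities. Your treatment is in fact more careful than the paper's, which simply lists the boundary sign checks without addressing local existence or the potential singularity at $N=0$ that you isolate via the total-population ODE.
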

\begin{proof}
The proof of this theorem is given in the Appendix \ref{PositivityProof}.
\end{proof}

\begin{theorem}
\label{Bounded}	
The solutions of the SAIUQR system (\ref{stateeq}) with the initial conditions (\ref{IC}) are uniformly bounded in the region $\Omega$.
\end{theorem}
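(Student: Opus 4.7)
The plan is to exploit the fact that the six compartments exactly partition the total population $N = S+A+I+U+Q+R$, so the nonlinear incidence terms and the internal transfer terms (involving $\beta_s$, $\gamma_a$, $\xi_a$, $\gamma_q$, $\eta_a$, $\eta_i$, $\eta_u$, $\rho_s$, $\theta$) should cancel pairwise when I sum the six equations of system~(\ref{stateeq}). First, I would add all six ODEs and verify by inspection that every term other than the recruitment $\Lambda_s$ and the natural mortality $-\delta X$ (for $X \in \{S,A,I,U,Q,R\}$) appears once with a positive sign and once with a negative sign. This yields the scalar linear ODE
\begin{equation*}
\frac{dN}{dt} \;=\; \Lambda_s - \delta N.
\end{equation*}

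Next, I would solve this elementary linear equation explicitly,
\begin{equation*}
N(t) \;=\; \frac{\Lambda_s}{\delta} + \Bigl(N(t_0)-\frac{\Lambda_s}{\delta}\Bigr)e^{-\delta(t-t_0)},
\end{equation*}
from which two observations follow immediately: $\limsup_{t\to\infty} N(t) = \Lambda_s/\delta$, and for any initial condition satisfying $N(t_0)\le \Lambda_s/\delta$ we have $N(t)\le \Lambda_s/\delta$ for all $t\ge t_0$, while for $N(t_0)>\Lambda_s/\delta$ the total population decays monotonically toward $\Lambda_s/\delta$. Invoking Theorem~\ref{Postivity} to guarantee that each state variable stays nonnegative, each individual compartment is dominated by $N(t)$ and is therefore bounded in forward time.

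I would then package the conclusion by defining the invariant, absorbing region
\begin{equation*}
\Omega \;=\; \Bigl\{(S,A,I,U,Q,R)\in \mathbf{R}_{+}^{6}\;:\; S+A+I+U+Q+R \le \tfrac{\Lambda_s}{\delta}\Bigr\},
\end{equation*}
and checking that $\Omega$ is positively invariant under the flow of~(\ref{stateeq}) (since $dN/dt\le 0$ on its outer boundary $N=\Lambda_s/\delta$) and globally attracting (since any trajectory starting outside enters $\Omega$ in finite time by the explicit solution above). This gives uniform boundedness of every solution and identifies $\Omega$ as the biologically meaningful phase space on which the subsequent equilibrium and stability analysis can be carried out.

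There is no substantive obstacle here: the only step requiring care is the bookkeeping in the cancellation, namely checking that the $\rho_s\gamma_q Q$ and $(1-\rho_s)\gamma_q Q$ terms sum to $\gamma_q Q$ (cancelling the $-\gamma_q Q$ from the $Q$ equation), that $\theta\gamma_a A$ and $(1-\theta)\gamma_a A$ sum to $\gamma_a A$ (cancelling the $-\gamma_a A$ from the $A$ equation), and that the incidence term $\beta_s S(\alpha_a A/N+\alpha_i I/N+\alpha_u U/N)$ cancels exactly between the $S$ and $A$ equations; once these cancellations are made explicit, everything reduces to the one-dimensional linear comparison argument above.
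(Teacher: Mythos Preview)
Your proposal is correct and follows essentially the same approach as the paper: sum the six equations to obtain the scalar linear ODE $dN/dt = \Lambda_s - \delta N$, deduce $\limsup_{t\to\infty} N(t)\le \Lambda_s/\delta$, and combine with positivity to bound each compartment. The only cosmetic difference is that the paper defines $\Omega$ via the componentwise bounds $0\le S,A,I,U,Q,R\le \Lambda_s/\delta$ rather than your (tighter and more natural) sum constraint $S+A+I+U+Q+R\le \Lambda_s/\delta$; your version is in fact slightly more careful, since you explicitly verify the cancellations and write out the explicit solution for $N(t)$.
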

\begin{proof}
The proof of this theorem is given in the Appendix \ref{BoundedProof}.
\end{proof}

\subsection{Basic reproduction number}

In any infectious disease modeling, the basic reproduction number is the key epidemiological parameter for describing the characteristics of the diseases. The basic reproduction number symbolized by $\mathcal{R}_{0}$ and is defined as ``the number of secondary infected individuals caused by a single infected individuals in the entire susceptible individuals" \cite{Diekmann90}. The dimensionless basic reproduction number $\mathcal{R}_{0}$ quantifies the expectation of the disease die out or the spreading of the diseases. For, $\mathcal{R}_{0} < 1$ describes on an average an infected population spreads less than a new infective during the course of its infection period, and thus the diseases can die out.  For, $\mathcal{R}_{0} > 1$ describes each infected individuals spread on an average more than 1 new infection, and the disease can spread throughout the population. Various techniques can be used to compute the basic reproduction number $\mathcal{R}_{0}$ for an epidemic outbreak. In our present study, we use the next generation matrix to evaluate $\mathcal{R}_{0}$ \cite{Diekmann90}. In our compartmental model, the following classes are explicitly related to the outbreak of the novel coronavirus diseases: $A$, $I$, $U$, and $Q$. Thus, from the SAIUQR model system (\ref{stateeq}), we get the matrices $\mathcal{F}$ for the new infection and $\mathcal{V}$ for the transition terms are given by, respectively

\begin{eqnarray*}
\mathcal{F} &=& \left[ \begin{array}{c}
       \beta_s S \bigg(\alpha_a \frac{A}{N} + \alpha_i \frac{I}{N} + \alpha_u \frac{U}{N} \bigg)  \\
       0 \\
       0 \\
       0
     \end{array} \right],\\
     \mathcal{V} &=& \left[ \begin{array}{c}
       (\xi_a + \gamma_a + \eta_a + \delta)A  \\
       - \theta \gamma_a A  - (1-\rho_s)\gamma_q Q + (\eta_i + \delta) I \\
      - (1 - \theta) \gamma_a A + (\eta_u + \delta) U   \\
        - \xi_a A + (\gamma_q + \delta) Q
     \end{array} \right].
\end{eqnarray*}

The variational matrix for the SAIUQR system (\ref{stateeq}), can be evaluated at an infection-free singular point $E^{0}(S^{0}, A^{0}, Q^{0}, I^{0}, U^{0}, R^{0}) = (\frac{\Lambda_{s}}{\delta}, 0, 0, 0, 0, 0)$,  we have
\begin{eqnarray*}
F &=& \left[ \begin{array}{cccc}
       \beta_s \alpha_a \frac{S}{N} & \beta_s \alpha_i \frac{S}{N} & \beta_s \alpha_u \frac{S}{N} & 0 \\
       0 & 0 & 0 & 0 \\
       0 & 0 & 0 & 0 \\
       0 & 0 & 0 & 0
     \end{array} \right],\\
V &=& \left[ \begin{array}{cccc}
       \xi_a + \gamma_a + \eta_a + \delta & 0 & 0 & 0 \\
       - \theta \gamma_a & (\eta_i + \delta) & 0 & - (1-\rho_s)\gamma_q \\
       - (1-\theta) \gamma_a & 0 & (\eta_u + \delta) & 0 \\
       - \xi_a & 0 & 0 & \gamma_q + \delta
     \end{array} \right].
\end{eqnarray*}

The basic reproduction number $\mathcal{R}_0 = \rho(FV^{-1})$, where $\rho(FV^{-1})$ represents the spectral radius for a next generation matrix $FV^{-1}$. Therefore, from the SAIUQR model (\ref{stateeq}), we get the basic reproduction number $\mathcal{R}_0$ as
\begin{eqnarray*}
\label{R0}
\rho(F V^{-1}) &=& \mathcal{R}_{0} ~=~ \frac{\beta_s \alpha_a}{\xi_a + \gamma_a + \eta_a + \delta} \nonumber\\
 &+& ~~ \frac{(1-\theta)\beta_s \alpha_u \gamma_a}{(\eta_u + \delta)(\xi_a + \gamma_a +\eta_u + \delta)}  \nonumber\\[0.2cm]
&+& \frac{\beta_s \alpha_i (\theta \gamma_a (\gamma_q +\delta) + (1-\rho_s)\xi_a\gamma_q)}{(\xi_a + \gamma_a + \eta_a + \delta)(\gamma_q + \delta) (\eta_i +\delta)}.
\end{eqnarray*}

\subsection{Equilibria}

The SAIUQR model (\ref{stateeq}) has two biologically feasible equilibrium points, namely   \\
(i) infection free steady state $E^{0}(S^{0}, A^{0}, Q^{0}, I^{0}, U^{0}, R^{0}) = (\frac{\Lambda_{s}}{\delta}, 0, 0, 0, 0, 0)$, and    \\
(ii) the endemic equilibrium point $E^{*}(S^{*}, A^{*}, Q^{*}, I^{*}, U^{*}, R^{*})$, where   \\
$S^*=\frac{\Lambda_s}{\delta}- \hat{S}A^*,~I^*=\hat{I}A^*,~ U^*=\hat{U}A^*,~ Q^*=\hat{Q}A^*$ and $R^*=\hat{R}A^*$. The expression of $A^*$ is given by $A^*=\frac{\Lambda_s(\mathcal{R}_{0} -1)}{\delta\left\lbrace \hat{S}(\mathcal{R}_{0}-1)+1+\hat{I}+\hat{U}+\hat{Q}+\hat{R} \right\rbrace}$, where $\hat{S}=\frac{1}{\delta}\left[\xi_a+\gamma_a+\eta_a+\delta-\frac{\xi_a\rho_s\gamma_q}{\gamma_q+\delta} \right]$,

\noindent $\hat{I}=\frac{\theta\gamma_a(\gamma_q+\delta)+(1-\rho_s)\xi_a\gamma_q}{(\eta_i+\delta)(\gamma_q+\delta)}$, $\hat{U}=\frac{(1-\theta)\gamma_a}{\eta_u+\delta}$, $\hat{Q}=\frac{\xi_a}{\gamma_q+\delta} $
\noindent  and $\hat{R}=\frac{1}{\delta}\bigg[ \frac{(1-\theta)\gamma_a\eta_u}{\eta_u+\delta} + \frac{\eta_i\left\lbrace \theta\gamma_q(\gamma_q+\delta)+(1-\rho_s)\gamma_q\xi_a\right\rbrace }{(\gamma_q+\delta)(\eta_i+\delta)} + \eta_a \bigg]  $.

It can be observed the the infection free singular point $E^{0}(S^{0}, A^{0}, Q^{0}, I^{0}, U^{0}, R^{0})$ is always feasible and the endemic equilibrium point $E^{*}(S^{*}, A^{*}, Q^{*}, I^{*}, U^{*}, R^{*})$ is feasible if the following condition holds:
\begin{eqnarray*}
(i)&&\mathcal{R}_0 ~>~ 1,  \\
(ii)&& \frac{1}{\xi_a+\gamma_a+\eta_a+\delta} \bigg[ \frac{\Lambda_{s}}{A^*} + \frac{\xi_a \rho_s \gamma_q}{\gamma_q +\delta}  \bigg] ~>~ 1.
\end{eqnarray*}

\subsection{Stability analysis}

In the present subsection, we investigate the linear stability analysis for the SAIUQR model (\ref{stateeq}) for the two feasible steady states. By using the techniques of linearization, we investigate the local dynamics of the complicated system of the coronavirus compartmental model. Generally, we linearize the SAIUQR model around each of the feasible steady states and perturb the compartmental model by a very small amount, and observe whether the compartmental model returns to that steady states or converges to any other steady states or attractor. The local stability analysis aids in understanding the qualitative behavior of the complex nonlinear dynamical system. By using the following theorem, we prove the local stability of the infection free singular point $E^{0}(S^{0}, A^{0}, Q^{0}, I^{0}, U^{0}, R^{0})$:

\begin{theorem}
\label{LASE0}
The infection free steady state  $E^{0}$ is locally asymptotic stable if $\mathcal{R}_0 < 1$ and unstable if $\mathcal{R}_0 > 1$.
\end{theorem}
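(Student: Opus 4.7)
The plan is to linearize the SAIUQR system at the disease-free equilibrium $E^{0}$ and read off its stability from the spectrum of the resulting Jacobian. First I would compute the $6\times 6$ Jacobian $J(E^{0})$ of the right-hand side of (\ref{stateeq}). Because at $E^{0}$ we have $A=I=U=Q=R=0$ and $S=\Lambda_{s}/\delta$, the row corresponding to $S'$ contributes only $\partial S'/\partial S=-\delta$ on the diagonal with couplings into the infected compartments, and the row for $R'$ contributes $\partial R'/\partial R=-\delta$ with couplings only to $A$, $I$, $U$ and no feedback into $S$ or the infected classes. This structure makes $J(E^{0})$ block triangular, immediately supplying two eigenvalues equal to $-\delta<0$ and reducing the stability question to a $4\times 4$ block $J_{\mathrm{inf}}$ governing the infected variables $(A,I,U,Q)$.

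The next step, and the heart of the argument, is to observe that this $4\times 4$ block equals exactly $F-V$, where $F$ and $V$ are the matrices already computed in the subsection that derived $\mathcal{R}_{0}$. Then I would invoke the next-generation-matrix theorem of van den Driessche and Watmough: under the standing sign hypotheses (namely $F\ge 0$ entrywise, $V$ a non-singular $M$-matrix with positive diagonal and non-positive off-diagonal entries, both of which can be verified by inspection of the explicit $F$ and $V$ above), the spectral abscissa $s(F-V)$ is negative if and only if $\rho(FV^{-1})=\mathcal{R}_{0}<1$, and positive if $\mathcal{R}_{0}>1$. Combined with the two trivial eigenvalues $-\delta$, this yields local asymptotic stability of $E^{0}$ for $\mathcal{R}_{0}<1$ and instability for $\mathcal{R}_{0}>1$, which is exactly the statement of the theorem.

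As a self-contained alternative I would instead expand $\det(J_{\mathrm{inf}}-\lambda\mathbb{I})$ into a quartic in $\lambda$ and apply the Routh-Hurwitz criterion, showing that all four Hurwitz determinants are positive precisely when $\mathcal{R}_{0}<1$. I expect this direct route to be the main obstacle: the characteristic polynomial carries the full set of transition rates $\xi_{a},\gamma_{a},\eta_{a},\eta_{i},\eta_{u},\gamma_{q},\rho_{s},\theta,\delta$ and its coefficients are algebraically cumbersome, so verifying the Routh-Hurwitz inequalities by hand and re-collecting them into the compact combination defining $\mathcal{R}_{0}$ requires careful bookkeeping. The shortcut via the next-generation matrix bypasses this entirely, which is why I would present the proof in the order above.
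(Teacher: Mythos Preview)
Your proposal is correct, and the reduction to the $4\times 4$ infected block with two trivial eigenvalues $-\delta$ matches the paper exactly. Where you diverge is in how you handle that block. You appeal directly to the van den Driessche--Watmough theorem, treating $s(F-V)<0\Leftrightarrow \rho(FV^{-1})<1$ as a black box once the sign hypotheses on $F$ and $V$ are checked. The paper instead carries out a self-contained argument: it writes the quartic characteristic equation in the form $m_{1}(\lambda)=1$, where $m_{1}$ is a sum of positive rational functions whose value at $\lambda=0$ is precisely $\mathcal{R}_{0}$, and then shows that any root with $\mathrm{Re}\,\lambda\ge 0$ would force $|m_{1}(\lambda)|\le m_{1}(0)=\mathcal{R}_{0}<1$, a contradiction; instability for $\mathcal{R}_{0}>1$ follows by the intermediate value theorem applied to $m_{1}$ on the positive real axis. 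Your route is shorter and more in line with current practice in mathematical epidemiology; the paper's route is elementary and avoids citing the general theorem, at the cost of a bit of case-by-case estimation. Your alternative via Routh--Hurwitz would also work but, as you anticipate, is the most laborious of the three.
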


\begin{proof}
The proof of this theorem is given in Appendix \ref{LASE0Proof}.
\end{proof}

\begin{theorem}
\label{GASE0}
The infection free steady state  $E^{0}$ is globally asymptotic stable for $\mathcal{R}_0 < 1$ in the bounded region $\Omega$.
\end{theorem}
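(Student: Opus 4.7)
The plan is to exhibit a linear Lyapunov function supported only on the infected compartments $A,I,U,Q$, with weights dictated by the structure of $FV^{-1}$, and then close out via LaSalle's invariance principle on the positively invariant compact region $\Omega$ furnished by Theorems~\ref{Postivity} and~\ref{Bounded}. The reason a \emph{linear} Lyapunov function suffices is that $S/N \le 1$ throughout $\Omega$, with equality exactly at the disease-free configuration $E^0$; this lets the nonlinear incidence be dominated term-by-term by its linearization, which is the setting in which $\mathcal{R}_0$ is the sharp threshold.

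Concretely, I would set
\[
L \;=\; A \;+\; \frac{\beta_s \alpha_i}{\eta_i+\delta}\,I \;+\; \frac{\beta_s \alpha_u}{\eta_u+\delta}\,U \;+\; \frac{\beta_s \alpha_i (1-\rho_s)\gamma_q}{(\eta_i+\delta)(\gamma_q+\delta)}\,Q.
\]
Differentiating along solutions of \eqref{stateeq} and using
\[
\beta_s \frac{S}{N}\bigl(\alpha_a A + \alpha_i I + \alpha_u U\bigr) \;\le\; \beta_s\bigl(\alpha_a A + \alpha_i I + \alpha_u U\bigr),
\]
the weights are arranged so that the coefficients of $I$, $U$, and $Q$ in $L'$ cancel exactly, leaving only a multiple of $A$. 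A routine regrouping identifies that multiple as
\[
L' \;\le\; (\xi_a + \gamma_a + \eta_a + \delta)\,(\mathcal{R}_0 - 1)\,A,
\]
where the bracket reproduces the three-term expression for $\mathcal{R}_0$ derived earlier, one summand per transmission pathway: the direct $A\to A$ channel, the $A\to U$ route through $(1-\theta)\gamma_a$, and the $A\to I$ route combining the direct $\theta\gamma_a$ channel with the indirect $A\to Q\to I$ channel through $\xi_a$ and $(1-\rho_s)\gamma_q$. Hence $L' \le 0$ on $\Omega$ whenever $\mathcal{R}_0 < 1$, with equality only on the set $\{A = 0\}$.

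To conclude, I would invoke LaSalle's invariance principle on $\Omega$. On the largest positively invariant subset of $\{A = 0\}$, the identity $A'\equiv 0$ combined with $A\equiv 0$ forces $\beta_s S(\alpha_i I + \alpha_u U)/N \equiv 0$, hence $I\equiv U\equiv 0$; then $I'\equiv 0$ forces $Q\equiv 0$; the $R$-equation yields $R\to 0$; and finally $S' = \Lambda_s - \delta S$ drives $S\to \Lambda_s/\delta$. The largest invariant subset therefore collapses to $\{E^0\}$, and global asymptotic stability of $E^0$ in $\Omega$ follows.

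The main technical obstacle is the algebraic verification that the coefficient of $A$ after cancellation reassembles into exactly $(\xi_a + \gamma_a + \eta_a + \delta)(\mathcal{R}_0 - 1)$; this is a careful but mechanical identity that matches the four pathway contributions against the three summands in the displayed formula for $\mathcal{R}_0$, and it is where the specific choice of weights $c_I, c_U, c_Q$ earns its keep. A minor point worth recording is that positivity (Theorem~\ref{Postivity}) is what justifies both $S/N \le 1$ throughout $\Omega$ and the cascade argument in the LaSalle step, so no additional assumptions beyond $\mathcal{R}_0 < 1$ are needed.
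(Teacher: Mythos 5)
Your proof is correct, but it takes a genuinely different route from the paper. The paper does not construct a Lyapunov function at all: it invokes the Castillo-Chavez--Feng--Huang framework, splitting the state into uninfected components $X=(S,R)$ and infected components $V=(A,I,U,Q)$ and verifying the two sufficient conditions (C1) that $X^*$ is globally asymptotically stable for the reduced system $\frac{dX}{dt}=F(X,0)$ (solved explicitly: $S\to\Lambda_s/\delta$, $R\to 0$), and (C2) that $G(X,V)=BV-\widehat{G}(X,V)$ with $B=D_VG(X^*,0)$ an M-matrix and $\widehat{G}\ge 0$ on $\Omega$ --- the latter reducing to exactly the same observation $S/N\le 1$ that drives your estimate. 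Your approach replaces the citation of that black-box theorem with an explicit linear Lyapunov function $L=A+c_II+c_UU+c_QQ$ and LaSalle's principle; I checked the weights, and the coefficients of $I$, $U$, $Q$ do cancel while the coefficient of $A$ reassembles to $(\xi_a+\gamma_a+\eta_a+\delta)(\mathcal{R}_0-1)$ (note this matches the next-generation computation; the paper's displayed second summand of $\mathcal{R}_0$ contains a typographical $\eta_u$ where $\eta_a$ should appear in the factor $\xi_a+\gamma_a+\eta_a+\delta$). What your route buys is a self-contained, verifiable argument and an explicit decay estimate on the infected compartments; what the paper's route buys is brevity and the automatic handling of the uninfected block via (C1). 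Two small points to tighten: in the LaSalle cascade, deducing $I\equiv U\equiv 0$ from $\beta_sS(\alpha_iI+\alpha_uU)/N\equiv 0$ requires noting that $S>0$ on any invariant set (which follows since $S'|_{S=0}=\Lambda_s+\rho_s\gamma_qQ>0$), and on a bounded invariant set you should conclude $R\equiv 0$ and $S\equiv\Lambda_s/\delta$ rather than mere convergence; also, since $L$ vanishes on all of $\{A=I=U=Q=0\}$ and not only at $E^0$, the ``stability'' half of global asymptotic stability should be supplied by Theorem~\ref{LASE0} rather than by $L$ itself. Neither point is a genuine gap.
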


\begin{proof}
The proof of this theorem is given in Appendix \ref{GASE0Proof}.
\end{proof}

\begin{theorem}
\label{trans-EEs}
The SAIUQR model system (\ref{stateeq}) admits a locally asymptotic stable around the endemic equilibrium point $E^*$ for $\mathcal{R}_0 > 1$. Also, the system (\ref{stateeq}) experiences forward bifurcation at $\mathcal{R}_0 = 1$.
\end{theorem}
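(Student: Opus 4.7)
The plan is to handle both claims simultaneously using the Castillo-Chavez--Song center manifold theorem, since it gives the direction of the bifurcation at $\mathcal{R}_0=1$ and, as a by-product, the local asymptotic stability of the branch of endemic equilibria that emerges for $\mathcal{R}_0$ slightly above $1$. I will take $\beta_s$ as the bifurcation parameter and define the critical value $\beta_s^{*}$ by solving $\mathcal{R}_0(\beta_s^{*})=1$ in the closed formula for $\mathcal{R}_0$ derived earlier. At $(E^{0},\beta_s^{*})$ the Jacobian has a simple zero eigenvalue (the other eigenvalues having negative real parts was already used implicitly in Theorem~\ref{LASE0}, and can be recycled here), which is exactly the hypothesis the theorem needs.

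First I would write the system in the form $\dot{x}=f(x,\beta_s)$ with $x=(S,A,I,U,Q,R)$ and linearize at $E^{0}$ for $\beta_s=\beta_s^{*}$. Then I would compute a right null vector $\mathbf{w}=(w_1,\dots,w_6)^{T}$ and a left null vector $\mathbf{v}=(v_1,\dots,v_6)$ of this Jacobian. Exploiting the block structure (the $S$- and $R$-equations decouple on the infected block), $\mathbf{v}$ will have $v_1=v_6=0$ and the remaining components $(v_2,v_3,v_4,v_5)$ will be determined, up to a positive scalar, by the linear system coming from the $(A,I,U,Q)$-subblock of $V-F$; similarly $\mathbf{w}$ will have a free positive component $w_2$ driving $w_3,w_4,w_5$ (all expressible as positive multiples of $w_2$ via $\hat{I},\hat{U},\hat{Q}$ from the equilibrium formulas), and a negative $w_1$ reflecting depletion of susceptibles. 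I would normalize so that $\mathbf{v}\cdot\mathbf{w}=1$.

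Next I would evaluate the two bifurcation coefficients
\begin{equation*}
a=\sum_{k,i,j} v_k\, w_i\, w_j\, \frac{\partial^{2} f_k}{\partial x_i\, \partial x_j}(E^{0},\beta_s^{*}), \qquad
b=\sum_{k,i} v_k\, w_i\, \frac{\partial^{2} f_k}{\partial x_i\, \partial \beta_s}(E^{0},\beta_s^{*}).
\end{equation*}
Only the $k=2$ (the $A$-equation) contributions are nonzero because it is the only equation with nonlinear terms, and all nonlinearities come from the mass-action products $S\cdot(\alpha_a A+\alpha_i I+\alpha_u U)/N$. Computing $\partial^{2} f_2/\partial x_i\partial x_j$ at $E^{0}$ will produce, after using $N^{0}=S^{0}=\Lambda_s/\delta$, terms of the form $-(\delta/\Lambda_s)\beta_s^{*}\alpha_{\bullet}(w_1 w_j+w_j w_1)$ and similar cross terms; all such terms are negative because $w_1<0$ and $w_2,w_3,w_4,w_5>0$ while $v_2>0$. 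Thus $a<0$. For $b$, the only nonzero second mixed partials are $\partial^{2} f_2/\partial A\partial \beta_s$, etc., which give $v_2\,(\alpha_a w_2+\alpha_i w_3+\alpha_u w_4)>0$, hence $b>0$.

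By Theorem~4.1 of Castillo-Chavez and Song, the signs $a<0$, $b>0$ imply that the system undergoes a forward (transcritical) bifurcation at $\mathcal{R}_0=1$, the disease-free equilibrium losing stability while a locally asymptotically stable branch of endemic equilibria $E^{*}$ appears for $\mathcal{R}_0$ slightly greater than $1$; this proves both assertions in the stated regime. The main obstacle I expect is purely bookkeeping: carrying the six-dimensional indices through the double sum for $a$ without sign errors, and verifying that every cross term retains the correct sign after the $N$-dependent mass-action derivatives are expanded. I would organize the computation by grouping contributions according to which infected compartment $x_j$ couples with $x_1=S$, so that each group manifestly contributes a negative quantity and the overall sign of $a$ is transparent.
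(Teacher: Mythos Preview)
Your overall strategy matches the paper's: both apply the Castillo-Chavez--Song center manifold theorem with $\beta_s$ as bifurcation parameter, compute right and left null vectors of the Jacobian at $(E^{0},\beta_s^{*})$, and then verify $a<0$, $b>0$ to conclude forward bifurcation and local stability of $E^{*}$ for $\mathcal{R}_0>1$.

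There is, however, a concrete gap in your argument for $a<0$. You anticipate that the contributions to $a$ come from cross terms pairing $x_1=S$ with the infected compartments, with the sign driven by $w_1<0$. But for the standard incidence $\beta_s S\Phi/N$ (with $\Phi=\alpha_a A+\alpha_i I+\alpha_u U$) one has $\partial f_2/\partial S=\beta_s\Phi(N-S)/N^2$, and differentiating this in any other variable and evaluating at $E^{0}$ (where $N=S$ and $\Phi=0$) gives zero. Hence every mixed partial $\partial^{2} f_2/\partial S\,\partial x_j$ vanishes at $E^{0}$, and $w_1$ does not appear in $a$ at all. Your proposed organization ``according to which infected compartment $x_j$ couples with $x_1=S$'' therefore produces an empty sum.

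The actual mechanism, which the paper uses, is different: the nonzero second partials are $\partial^{2} f_2/\partial x_i\,\partial x_j$ for $i,j\in\{2,\dots,6\}$, and each of these is \emph{negative} at $E^{0}$ because raising any of $A,I,U,Q,R$ raises $N$ and thus lowers $S/N$ (the dilution effect of the denominator). Combined with $v_2>0$ and $w_2,\dots,w_6>0$ (each verified componentwise from the null-vector formulas at $\mathcal{R}_0=1$; note you must also check $w_6>0$, which you omitted), this yields $a<0$. Your computation of $b>0$ is fine and agrees with the paper.
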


\begin{proof}
	The proof of this theorem is given in Appendix \ref{trans-EEsProof}.
\end{proof}

\begin{table*}[ht]
  \caption{Table of the biologically relevant parameter values and their description for the SAIUQR model system (\ref{stateeq}).}
\label{parval}
 {\begin{tabular}{p{1.3cm} p{9.8cm} l l }
  \\[-0.1cm]
  \hline
  Symbol      & Biological interpretations & Values \& Source   \\[0.1cm]
  \hline
 $\Lambda_s $ & birth rate of the susceptible individuals & Table ~\ref{parvalIC} \\[0.18cm]
  $\beta_s$   & probability of the disease transmission coefficient &  Estimated \\[0.18cm]
  $\alpha_a$  & modification factor for asymptomatic infected individuals   & Estimated \\[0.18cm]
  $\alpha_i$  & modification factor for symptomatic infected individuals   & Estimated \\[0.18cm]
  $\alpha_u$  & modification factor for unreported infected individuals  & Estimated  \\[0.18cm]
  $\rho_s$    & fraction of quarantine individuals that become susceptible individuals & 0.5~(0, ~1)~ Fixed \\[0.18cm]
  $\gamma_q$  & rate at which the quarantined individuals becomes susceptible individuals & Estimated \\[0.18cm]
  $\delta $   & natural death rate of entire individuals & 0.1945 $\times 10^{-4}$ ~ \cite{KhCov20} \\[0.18cm]
  $\xi_a$     & rate at which the asymptomatic individuals become quarantined & 0.07151 ~ \cite{Sarkar20} \\[0.18cm]
  $\gamma_a $ & rate of transition from the asymptomatic individuals to infected individuals &  Estimated  \\[0.18cm]
  $\eta_a $   & average recovery rate of asymptomatic individuals & $\frac{1}{7.48}$ ~ \cite{Sarkar20} \\[0.18cm]
  $\theta $   & fraction of asymptomatic individuals that become reported infected individuals & 0.8~(0, ~1) ~ Fixed \\[0.18cm]
  $\eta_i$    & average recovery rate of reported symptomatic infected individuals & $\frac{1}{7}$ ~ \cite{Liu20} \\[0.18cm]
  $\eta_u$    & average recovery rate of unreported symptomatic infected individuals & $\frac{1}{7}$ ~ \cite{Liu20} \\[0.15cm]
  \hline
  \end{tabular}}
\end{table*}

\begin{figure*}[ht!]
\centering
\includegraphics[width=\textwidth]{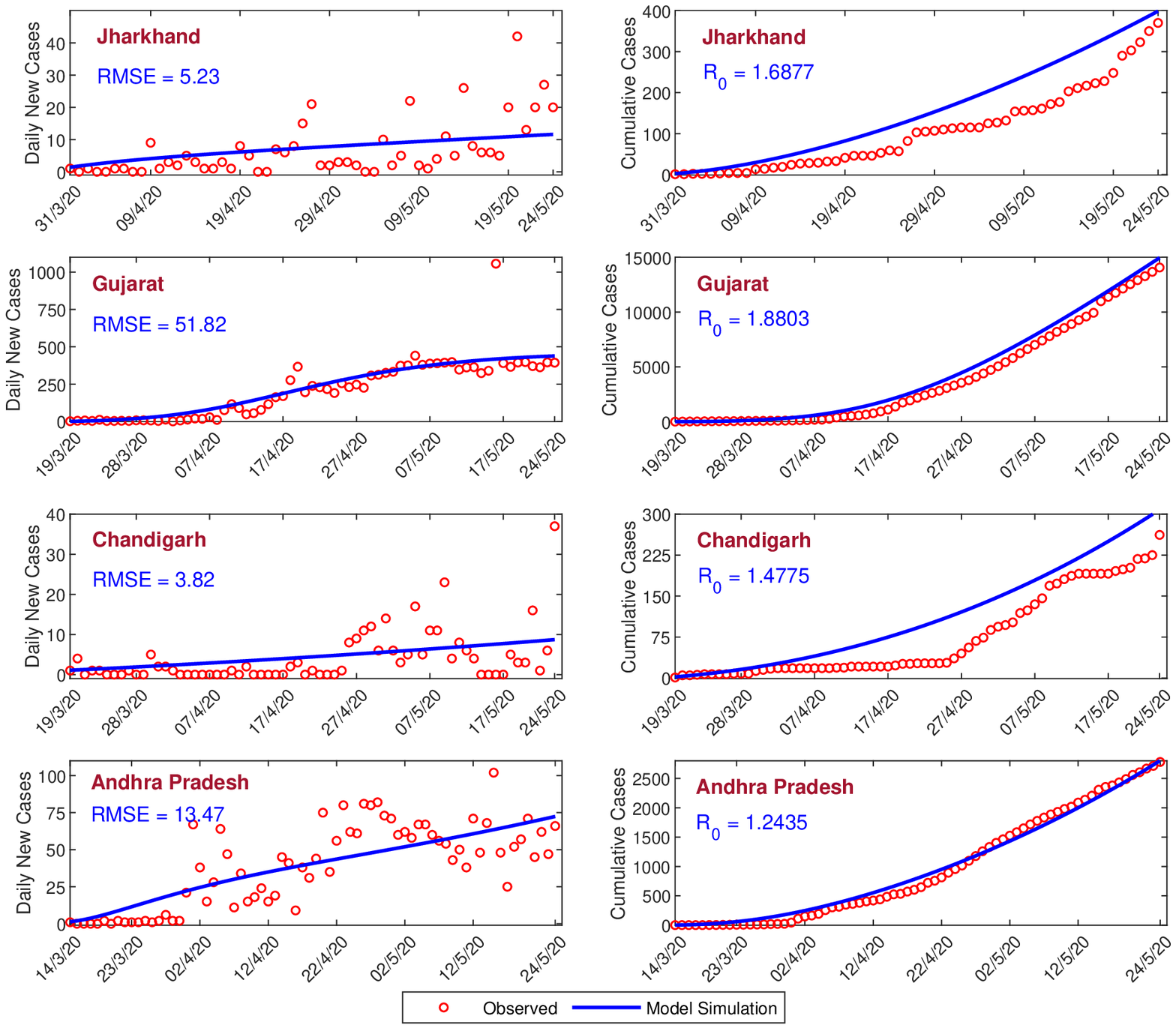}
\caption{Model estimation based on the observed data. Model simulations fitted with the daily new cases and the cumulative confirmed cases of COVID-19 for four provinces of India, namely Jharkhand, Gujarat, Chandigarh, and Andhra Pradesh. Observed data points are shown in the red circle while the blue curve represents the best fitting curve for the SAIUQR model. The first column represents the daily new cases and the second column represents the cumulative confirmed cases of COVID-19. The estimated parameter values are listed in the Table \ref{parvalEst} and other parameter values are listed in the Table \ref{parval}. The initial values used for this parameter values are presented in the Table \ref{parvalIC}. The RMSE and the value of $R_0$ for each provinces are mentioned in the inset.}
\label{F:DataFit}
\end{figure*}

\section{Numerical simulation}

In this section, we conduct some numerical illustrations to validate our analytical findings. Analytically, we perform the local stability analysis for infection free steady state $E^0$ and a unique endemic equilibrium point $E^*$. We also perform the transcritical bifurcation at the threshold $\mathcal{R}_{0} = 1$ and the global stability analysis for disease-free steady state $E^0$. In order to validate the analytical calculations, we used the estimated parameter values for Jharkhand, the state of India and the techniques for parameter estimation are described in the subsection \ref{modcalb}.

\subsection{Model calibration}
\label{modcalb}
We have calibrated our SAIUQR model (\ref{stateeq}) with the observed daily new COVID-19 cases. We have considered three states of India namely Jharkahnd, Gujarat, Andhra Pradesh and one city of India namely Chandigarh. The daily new COVID-19 cases are collected from the first COVID-19 case reported and up to May 24, 2020. The daily reported COVID-19 data were obtained from COVID19 INDIA (https://www.covid19india.org/) \cite{covid-19tr}. We have estimated six model parameters, namely  $\beta_s$, $\alpha_a$, $\alpha_i$, $\alpha_u$,  $\gamma_a$ and $\gamma_q$, out of fourteen system parameters for the system (\ref{stateeq}) by using least square method \cite{Banerjee15}. The values of these parameters and the initial population size plays an important role in the model simulation. The parameters are estimated by assuming the initial population size. The initial population are presented in the Table \ref{parvalIC}. Three days moving average filter has been applied to the daily COVID-19 cases to smooth the data. The daily reported confirmed COVID-19 cases are fitted with the model simulation by using the least square method. The estimated parameter values are listed in the Table \ref{parvalEst}. Different set of parameter values locally minimizes the Root Mean Square Error (RMSE) and we have considered the set of parameter values, which gives the realistic value of the basic reproduction number  $\mathcal{R}_{0}$. RMSE is the measure of the accuracy of the fitting data and the RMSE is defined as follows:

\begin{eqnarray}
\begin{aligned}
RMSE=\sqrt{\frac{\Sigma_{i=1}^{n} ( O(i)-M(i))^2}{n}},
\nonumber
\end{aligned}
\end{eqnarray}

\noindent where $n$ represents the size of the observed data, $O(i)$ is the reported daily confirmed COVID-19 cases and $M(i)$ represents the model simulation. The Figure \ref{F:DataFit} shows the daily confirmed COVID-19 cases (first column), cumulative confirmed COVID-19 cases (second column) and model simulations has been shown in the blue curve for all the four provinces of India. The values of RMSE and basic reproduction number $\mathcal{R}_{0}$ for all the four provinces are presented in the inset of the figure. The SAIUQR model performs well for the three provinces, namely Jharkhand, Chandigarh, and Andhra Pradesh. The RMSE for Gujarat is higher than the other provinces as the number of daily confirmed COVID-19 cases are higher than the other provinces. The value of the basic reproduction number $\mathcal{R}_{0}$ for Jharkhand, Gujarat, Chandigarh, and Andhra Pradesh are 1.6877, 1.8803, 1.4775, and 1.2435, respectively and the trend of daily confirmed COVID-19 cases is increasing. This increasing trend of the daily new COVID-19 cases for all the four provinces of India are captured by our model simulation. In all the four provinces $\mathcal{R}_{0} > 1$, so the disease free equilibrium point $E^0$ is unstable. The basic reproduction number for the four states are greater than unity, which indicates the substantial outbreak of the COVID-19 in the states.

\subsection{Validation of analytical findings}

In this section, we have validated our analytically findings by using numerical simulations for the parameter values in the Table \ref{parval}, and the estimated parameter values in the Table \ref{parvalEst} for our SAIUQR model for the coronavirus diseases. The parameter values are estimated for the observed COVID-19 data for the three states of India, namely Jharkhand, Gujarat and Andhra Pradesh and for the city Chandigarh. Our analytical findings stated in the Theorem \ref{LASE0}, shows that the disease free equilibrium point $E^0$ is locally asymptotically stable with $\mathcal{R}_{0} <1$, and the Theorem \ref{trans-EEs} stated that a unique endemic equilibrium point $E^*$ is locally asymptotically stable for $\mathcal{R}_{0} > 1$. The numerical simulations of the SAIUQR model system (\ref{stateeq}) have been presented in the Figure \ref{F:LAS} for all the six individuals and for the different values of the disease transmission rate $\beta_s$. The value of the parameters considered for numerical simulations are $\alpha_a=0.264,$ $\alpha_i=0.76,$ $\alpha_u=0.96,$ $\gamma_a=0.0012,$ $\gamma_q=0.0015,$ $\delta=0.03,$ $\Lambda_s=1200$ and the other model parameter values are listed in the Table \ref{parval}. Six initial population sizes are considered for the model simulation, namely $(39402, 1500, 2000, 20, 0, 0)$, $(31402, 1200, 1500, 16, 0, 0)$, $(25402, 900, 1000, 12, 0, 0)$, $(20402, 600, 500, 8, 0, 0)$, $(15402, 300, 100, 4, 0, 0)$ and $(15000, 100, 50, 1, 0, 0)$. The time series simulation have been displayed for $\beta_s=1.10$ (red curves in the Figure \ref{F:LAS}) and $\beta_s=0.55$ (blue curves in the Figure \ref{F:LAS}). The values of $R_0$ are $1.2889$ and $0.7030$ for $\beta_s=1.10$ and $\beta_s=0.55$, respectively.  The blue curves in the Figure \ref{F:LAS} shows that the disease free equilibrium point $E^0(40000, 0, 0, 0, 0, 0)$ is locally asymptotically stable as well as globally asymptotically stable with $R_0=0.7030 < 1$. Our SAIUQR model system (\ref{stateeq}) converges to the endemic equilibrium point $E^*(31035.0, 1146.0, 17.6, 1.6, 2601.5, 5198.4)$ for $\beta_s=1.10$ and $R_0=1.2889 > 1$ (red curves), which has been displayed in the Figure \ref{F:LAS}. Hence, this numerical simulation verifies the analytical findings in Theorem \ref{LASE0} and Theorem \ref{trans-EEs}.

The Theorem \ref{trans-EEs} states that the SAIUQR model system (\ref{stateeq}) undergoes a transcritical bifurcation at the threshold $\mathcal{R}_{0} = 1$. We have plotted the COVID-19 reported symptomatic individuals $(I)$ in the $(R_0, I)$ plane by gradually increasing the disease transmission rate $\beta_s$ (see the Figure \ref{F:BF}). The model parameter values are  $\alpha_a=0.264,$ $\alpha_i=0.76,$ $\alpha_u=0.96,$ $\gamma_a=0.0012,$ $\gamma_q=0.0015,$ $\delta=0.03,$ $\Lambda_s=1200$ and the other parameter values listed in the Table \ref{parval}. We vary the disease transmission rate $\beta_s$ from 0.67 to 1.10 and computed the basic reproduction number $\mathcal{R}_{0}$ and the COVID-19 reported symptomatic individuals $(I)$. The numerically computed values are presented in the Figure \ref{F:BF}, which clearly shows that the system (\ref{stateeq}) experiences transcritical bifurcation at the threshold $\mathcal{R}_{0} = 1$. The blue curve represents the stable endemic equilibrium point $E^*$, black line represents the stable disease free equilibrium point $E^0$ and the red line represents the unstable branch of the disease free equilibrium point $E^0$. Hence, the Figure \ref{F:BF} shows that disease free equilibrium point $E^0$ is stable for the reproduction number $\mathcal{R}_{0} < 1$ and an endemic equilibrium point $E^*$ is stable for the reproduction number $\mathcal{R}_{0} > 1$. From the biological point of view, it can be described that the model system (\ref{stateeq}) will be free from COVID-19 for the reproduction number $\mathcal{R}_{0} < 1$ and the coronavirus diseases will spread throughout the people for $\mathcal{R}_{0} > 1.$

Figure \ref{F:R0}(a) represents the reproduction number $\mathcal{R}_{0}$ decreases as the recovery rate $\eta_i$ of reported infected individuals increases and the reproduction number $\mathcal{R}_{0}$ becomes less than one for $\beta_s=0.85$ and $\beta_s=0.76$. This indicates that the disease free equilibrium point $E^0$ switches the stability of the model system (\ref{stateeq}) as $\eta_i$ changes. But the reproduction number $\mathcal{R}_{0}$ remains grater than one for the disease transmission rate $\beta_s=1.10$ and $\beta_s=0.95$, that is, for large $\beta_s$ unique endemic equilibrium point remains locally asymptotically stable even if $\eta_i$ changes. In terms of COVID-19 diseases this interprets that if the rate of recovery for infected individuals $(\eta_i)$ be increased, which can be done by vaccinees or specific therapeutics, the model system (\ref{stateeq}) changes its stability to disease free equilibrium $E^0$ from endemic equilibrium $E^*$ but if the transmission of the disease $(\beta_s)$ is high enough then by vaccines or specific therapeutics, the system (\ref{stateeq}) can not change its stability from endemic equilibrium to disease free equilibrium.

Figure \ref{F:R0}(b) represents the reproduction number $\mathcal{R}_{0}$ increases as $\gamma_a$ (transition rate from asymptomatic individuals to symptomatic individuals) increases but the reproduction number $\mathcal{R}_{0}$ remains less than one for the disease transmission rate $\beta_s=0.55$ and $\beta_s=0.45$. For $\beta_s=0.66$ and $\beta_s=0.76$, the basic reproduction number $\mathcal{R}_{0}$ becomes grater than one and the SAIUQR model system (\ref{stateeq}) loses the stability of disease free equilibrium point $E^0$. Thus, to flatten the COVID-19 curve in any of the four provinces of India, reduction of the transmission of the COVID-19 disease is utmost priority even if the recovery rate increased by medication. Biologically it means that to mitigate the COVID-19 diseases, the people must have to maintain the social distancing, contact tracing by avoiding the mass gathering.

The predictive competency for the SAIUQR model system (\ref{stateeq}) requires valid estimation of the system parameters $\gamma_a$ (rate of transition from asymptomatic to symptomatic infected individuals), $\gamma_q$ (the rate that the quarantine become susceptible), $\theta$ (fraction of asymptomatic infectious that become reported symptomatic infectious), $\xi_a$ (rate at which asymptomatic individuals become quarantined). In the Figure \ref{F:R0_Surf}(a), we plot the reproduction number $\mathcal{R}_{0}$ as a function of $\xi_a$ and $\gamma_q$ for the parameter values in the Table \ref{parval} and estimated parameters for the state Jharkhand, to encapsulate the significance of these values in the evolution of COVID-19 outbreak. From the Figure \ref{F:R0_Surf}(a), it can be observed that the parameters have a little influence on the outbreak of the coronavirus diseases as the parameters $\xi_a$ and $\gamma_q$ has a little contribution for the reproduction number $\mathcal{R}_{0}$. In the Figure \ref{F:R0_Surf}(b), we plot the reproduction number $\mathcal{R}_{0}$ as a function of $\theta$ and $\gamma_a$ for the parameter values in the Table \ref{parval} and estimated parameters for the state Jharkhand, to encapsulate the significance of these values in the evolution of COVID-19 outbreak. The Figure \ref{F:R0_Surf}(b) shows that the parameters $\theta$ and $\gamma_a$ are more influential in increasing the reproduction number $\mathcal{R}_{0}$. Thus, to control the outbreak of COVID-19, we must control the parameters $\theta$ and $\gamma_a$. The correctness of these values rely  on the input of medical and biological epidemiologists. Thus, the fraction $\theta$ of reported symptomatic infected individuals may be substantially increased by public health reporting measures, with greater efforts to recognize all the present cases. Our model simulation reveals the effect of an increase in this fraction $\theta$ in the value of the reproduction number $\mathcal{R}_{0}$, as evident in the Figure \ref{F:R0_Surf}(b), for the COVID-19 epidemic in the four sates of India.

\subsection{Short-term prediction}

Mathematical modeling of infectious diseases can provide short-term and long-term prediction of the pandemic \cite{Giordano20,KhCov20,Liu20,Sarkar20}. Due to absence of any licensed vaccines or specific therapeutics, forecasting is of utmost importance for strategies to control and prevention of the diseases with limited resources. It should be noted that here we can predict the epidemiological traits  of SARS-CoV-2 or COVID-19 for short-term only as the Governmental strategies can be altered time to time resulting in the corresponding changes in the associated parameters of the proposed SAIUQR model. Also, it is true that the scientists are working for drugs and/or effective vaccines against COVID-19 and the presence of such pharmaceutical interventions will substantially change the outcomes \cite{Corey20}. Thus, in this study we performed a short-term prediction for our SAIUQR model system (\ref{stateeq}) using the parameter values in the Table \ref{parval} and the estimated parameter values in the Table \ref{parvalEst}.  Using the observed data up to May 24, 2020, a short-term prediction (for 20 days) has been done for daily new COVID-19 cases (first column) and cumulative confirmed cases (second column) are presented in the Figure \ref{F:Pred}. The black dot-dashed curve represents the short-term prediction of our SAIUQR model from May 25, 2020 to June 13, 2020.  Red shaded region is the standard deviation band of our SAIUQR model simulated curve. The standard deviations are computed from the model simulation based on the estimated data. In each of four states, we plot the standard deviation bands at a standard deviation level above and below the model simulation for different days. Standard deviation band gives an estimation of the deviation of the actual model data. The trend of the  predicted daily COVID-19 cases is increasing for all the four provinces of India. Prediction of the SAIUQR model should be regarded as an estimation of the daily infected population and cumulative confirmed cases of the four states of India. From the SAIUQR model simulation, we can predict that the estimated daily new reported COVID-19 cases on June 13, 2020 will be approximately 15, 454, 12, and 96 in Jharkhand, Gujarat, Chandigarh, and Andhra Pradesh, respectively (see the left column of the Figure \ref{F:Pred}). Our SAIUQR model simulation predict that the confirmed cumulative number of cases on June 13, 2020 will be approximately 661, 23955, 514, and 4487 in Jharkhand, Gujarat, Chandigarh, and Andhra Pradesh, respectively (see the right column of the Figure \ref{F:Pred}).

\begin{table}[h]
\caption{The SAIUQR model parameter values estimated from the observed daily new COVID-19 cases for four provinces of India, namely Jharkhand, Gujarat, Chandigarh and Andhra Pradesh. Six important parameters $\beta_s$, $\alpha_a$, $\alpha_i$, $\alpha_u$, $\gamma_a$ and $\gamma_q$ are estimated among fourteen system parameters.}
\begin{center}
 \begin{tabular}{l l l l l l l}
\hline
 Provinces & $\beta_s$  & $\alpha_a$  & $\alpha_i$   & $\alpha_u$ & $\gamma_a$ & $\gamma_q$\\
 \hline
Jharkhand       & 0.760 & 0.264 & 0.760 & 0.9600 & 0.0012 & 0.0015\\
Gujarat         & 1.006 & 0.342 & 0.168 & 0.1308 & 0.0004 & 0.0046\\
Chandigarh      & 0.750 & 0.294 & 0.444 & 0.4600 & 0.0010 & 0.0011\\
Andhra Pradesh  & 0.431 & 0.419 & 0.688 & 0.7100 & 0.0006 & 0.0280\\
\hline
\end{tabular}
\end{center}
\label{parvalEst}
\end{table}

\begin{table}[h]
\caption{\emph{Initial population size and the values of $\Lambda_s$ used in numerical simulations for four different provinces of India, namely Jharkhand, Gujarat, Chandigarh, and Andhra Pradesh.}}
\begin{center}
 \begin{tabular}{l l l l l l l l}
\hline
 Provinces & $S(0)$  & $A(0)$  & $Q(0)$   & $I(0)$ & $U(0)$ & $R(0)$ & $\Lambda_s$\\
 \hline
Jharkhand  & 39402 & 575  & 19 & 1 & 0 & 0 & 1200\\
Gujarat    & 85402 & 1525 & 27 & 1 & 0 & 0 & 1300\\
Chandigarh & 20402 & 275  & 10 & 1 & 0 & 0 & 1200\\
Andhra Pradesh & 75401 & 355 & 12 & 1 & 0 & 0 & 970\\
\hline
\end{tabular}
\end{center}
\label{parvalIC}
\end{table}

\begin{figure*}[ht!]
\centering
\includegraphics[width=0.9\textwidth]{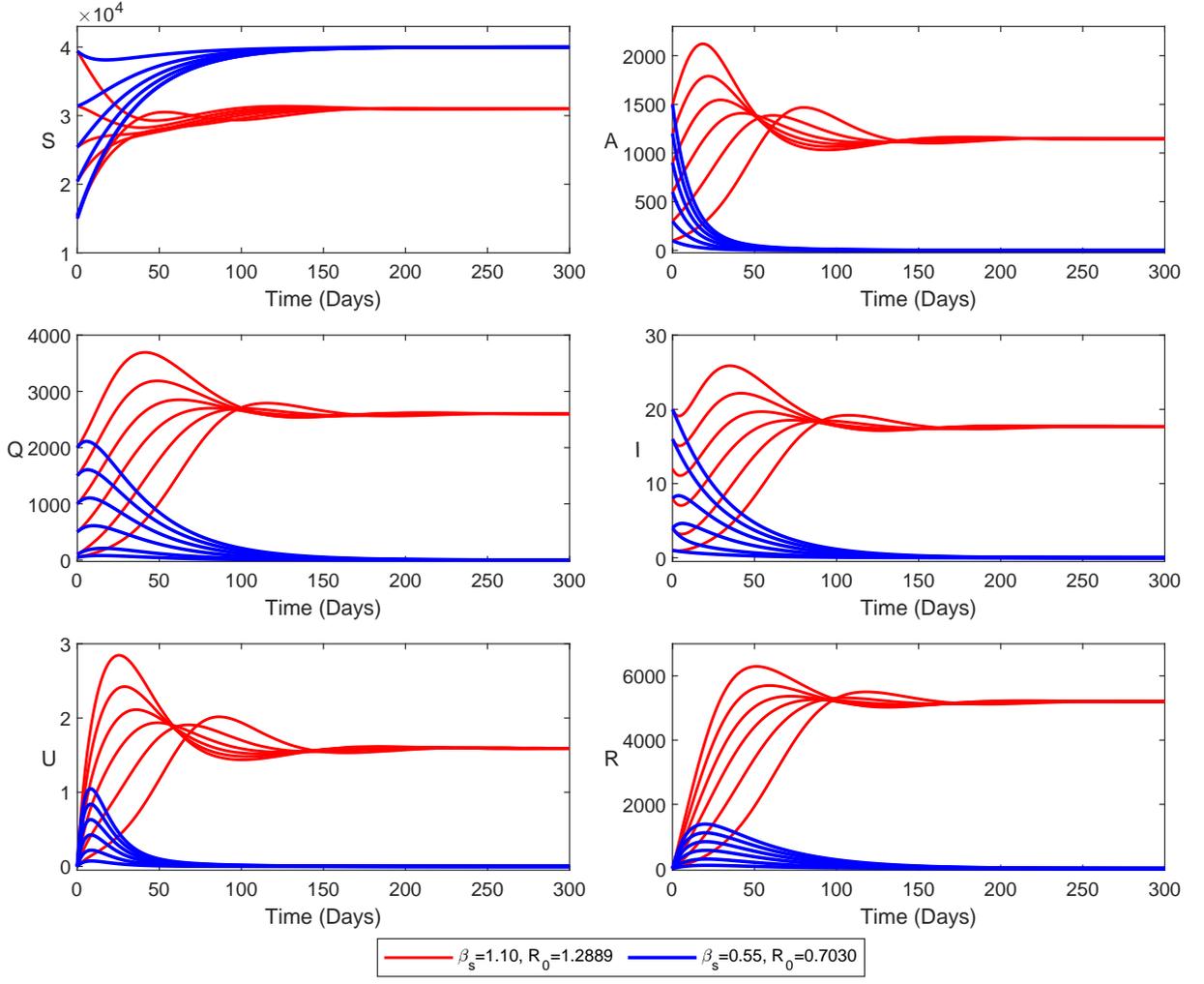}
\caption{Stability of the SAIUQR model system (\ref{stateeq}) around the disease free equilibrium point $E^0$ and an unique endemic equilibrium point $E^*$. The values of the estimated parameters are $\alpha_a=0.264,$ $\alpha_i=0.76,$ $\alpha_u=0.96,$ $\gamma_a=0.0012,$ $\gamma_q=0.0015,$ $\delta=0.03,$ $\Lambda_s=1200$ and other parameter values are listed in the Table \ref{parval}. Initial population sizes are $(39402, 1500, 2000, 20, 0, 0)$, $(31402, 1200, 1500, 16, 0, 0)$, $(25402, 900, 1000, 12, 0, 0)$, $(20402, 600, 500, 8, 0, 0)$, $(15402, 300, 100, 4, 0, 0)$, $(15000, 100, 50, 1, 0, 0)$. Time series solution for $\beta_s=1.10$ (red curves) and $\beta_s=0.55$ (blue curves). The values of $\mathcal{R}_0$ are $1.2889$ and $0.7030$ for $\beta_s=1.10$ and $\beta_s=0.55$, respectively. Disease free equilibrium point $E^0$ is locally asymptotically stable when $\mathcal{R}_0 < 1$ (blue curves) and the endemic equilibrium point $E^*$ is locally asymptotically stable when $\mathcal{R}_0 > 1$ (red curves).}
\label{F:LAS}
\end{figure*}


\begin{figure}
\includegraphics[width=0.45\textwidth]{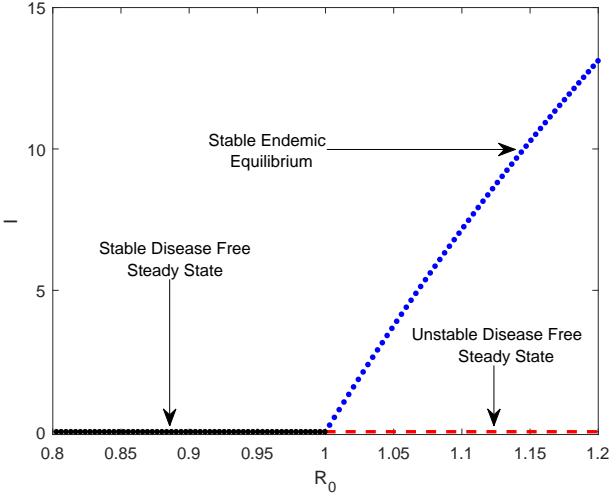}
\caption{ The figure represents the transcritical bifurcation diagram of the SAIUQR model system (\ref{stateeq}) with respect to the basic reproduction number $\mathcal{R}_0$. The parameter values are $\alpha_a=0.264,$ $\alpha_i=0.76,$ $\alpha_u=0.96,$ $\gamma_a=0.0012,$ $\gamma_q=0.0015,$ $\delta=0.03,$ $\Lambda_s=1200$ and other parameters as listed in the Table \ref{parval}. Stability of the SAIUQR system (\ref{stateeq}) exchange at the threshold $\mathcal{R}_0 = 1$.}
\label{F:BF}
\end{figure}

\begin{figure*}[ht]
\centering
\includegraphics[width=0.9\textwidth]{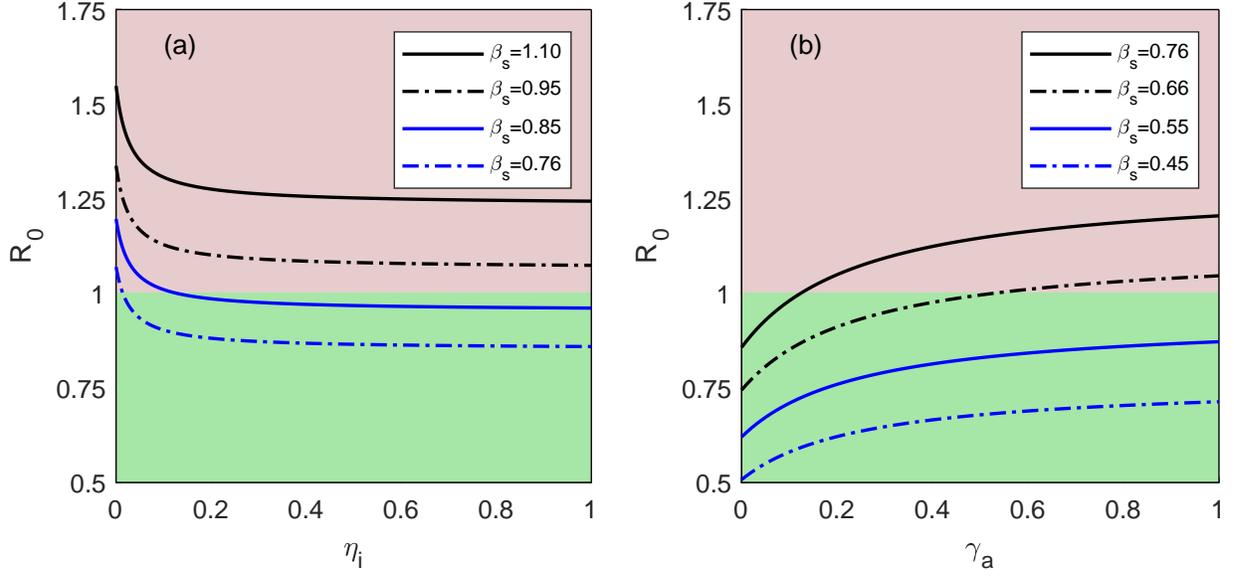}
\caption{The figures represents the basic reproduction number $\mathcal{R}_0$ in terms of (a) $\eta_i$ (rate of recovery for infected individuals) and (b) $\gamma_a$ (rate at which asymptomatic individuals develops detected symptomatic infected individuals). Green shaded region indicates $\mathcal{R}_0 < 1$ whereas the pink shaded region indicates $\mathcal{R}_0 > 1$. The parameter values are $\alpha_a=0.264,$ $\alpha_i=0.76,$ $\alpha_u=0.96,$ $\gamma_a=0.0012,$ $\gamma_q=0.0015,$ $\delta=0.03,$ $\Lambda_s=1200$ and the other parameter values are listed in the Table \ref{parval}. }
\label{F:R0}
\end{figure*}

\begin{figure*}[ht]
\centering
\includegraphics[width=0.9\textwidth]{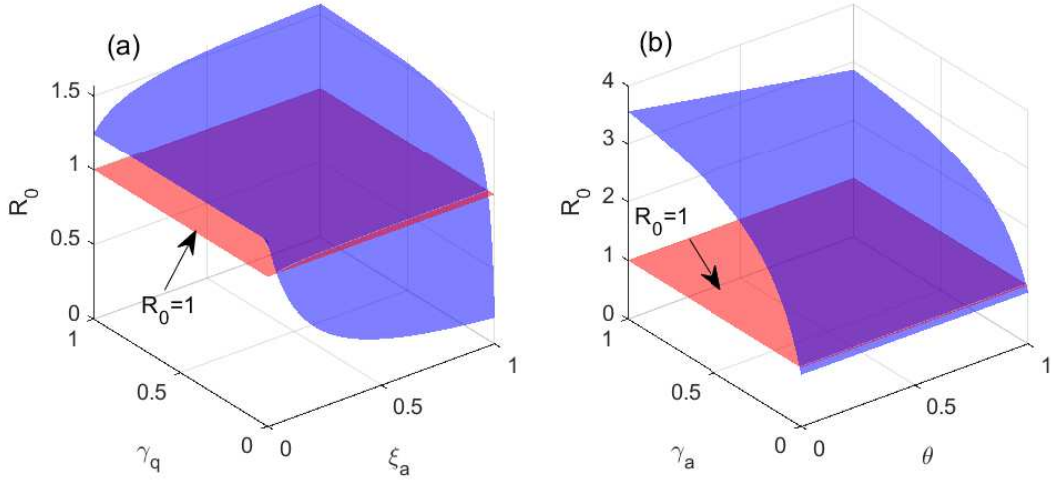}
\caption{The figures represents the surface plot of the basic reproduction number $\mathcal{R}_0$ in (a) ($\gamma_q$, ~$\xi_a$)-plane and (b) ($\gamma_a$, ~$\theta$)-plane. Red shading plane indicates $\mathcal{R}_0 = 1$. The parameter values for the sub-figure (a): $\beta_s=0.76$, $\alpha_a=0.264,$ $\alpha_i=0.76,$ $\alpha_u=0.96,$ $\gamma_a=0.0012,$ $\delta=0.03,$ $\Lambda_s=1200$ and for the sub-figure (b): $\beta_s=0.76$, $\alpha_a=0.264,$ $\alpha_i=0.76,$ $\alpha_u=0.96,$ $\gamma_q=0.0015,$ $\delta=0.03,$ $\Lambda_s=1200$ and the other parameter values are listed in the Table \ref{parval}. }
\label{F:R0_Surf}
\end{figure*}

\begin{figure*}[ht!]
\centering
\includegraphics[width=0.9\textwidth]{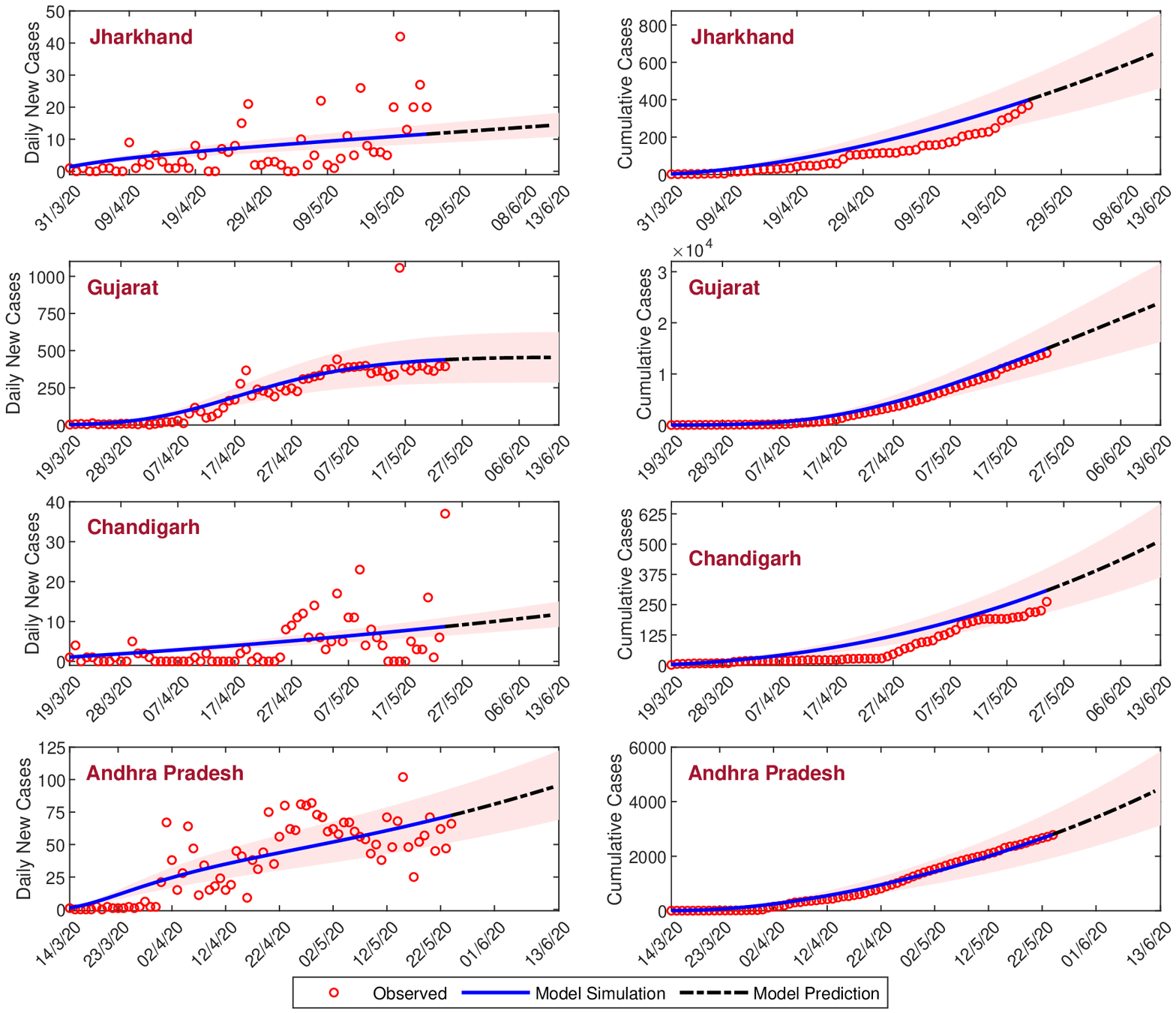}
\caption{The figure represents the short-term prediction of the daily new COVID-19 cases (first column) and the cumulative confirmed cases (second column) for the three states of India namely Jharkhand, Gujarat, Andhra Pradesh and one city of India namely Chandigarh. The black dot-dashed curve represents the prediction from May 25, 2020 to June 13, 2020 (20 days). The red shaded region is the standard deviation band of the SAIUQR model simulated curve.}
\label{F:Pred}
\end{figure*}

\section{Discussion and conclusion}

The SARS-CoV-2 pandemic in India is a potential menace throughout the country due to its exponential growth. Everyday the new cases are reported around 5-6 thousands and more than that from different states and territories  of India, which is an alarming situation as with the second most populated country worldwide \cite{covid-19tr}. Due to absence of any licensed vaccine, therapeutics or treatment and with a peculiar epidemiological traits of SARS-CoV-2, one would depend on the qualitative control of the diseases rather than complete elimination. During this period of an epidemic when person-to-person transmission is confirmed and the reported cases of SARS-CoV-2 viruses are rising throughout the globe, prediction is of utmost priority for health care planning and to manage the virus with limited resource. Furthermore, mathematical modeling can be a powerful tool in designing strategies to manage exponentially spreading coronavirus diseases in absence of any antivirals or diagnostic test.

In this study, we proposed and analyzed a new compartmental mathematical model for SARS-CoV-2 viruses to forecast and control the outbreak. In the model formulation, we incorporate the transmission variability of asymptomatic and unreported symptomatic individuals. We also incorporate the symptomatic infected populations who are reported by the public health services. We assume that reported infected individuals will no-longer associated into the infection as they are isolated and move to the hospitals or Intensive Care Unit (ICU). In our model, we incorporate the constant transmission rate in the early exponential growth phase of the SARS-CoV-2 diseases as identified in \cite{KhCov20,Gumel04}. We model the role of the Govt. imposed restrictions for the public in India, beginning on March 25, 2020, as a time-dependent decaying transmission rate after March 25, 2020. But, due to less stringent lockdown exponentially increasing the disease transmission rate, we are able to fit with increasing accuracy, our model simulations to the Indian reported cases data up to May 24, 2020.

We fit our SAIUQR model for the daily confirmed cases and cumulative confirmed cases of the four different provinces of India, namely Jharkhand, Andhra Pradesh, Chandigarh, and Gujarat with data up to May 24, 2020. The estimated model parameters for different states of India are given in the Table \ref{parvalEst} and the corresponding initial population size are listed in the Table \ref{parvalIC}. It can be observed that the basic reproduction number for four different provinces of India, namely Jharkhand, Andhra Pradesh, Chandigarh, and Gujarat are 1.6877, 1.2435, 1.4775, and 1.8830, respectively, which demonstrates the disease transmission rate is quite high that basically indicates the substantial outbreak of the COVID-19 diseases. This higher value of reproduction number $\mathcal{R}_{0}$ captures the outbreak of COVID-19 phenomena in India. Based on the estimated parameter values our model simulation suggest that the rate of disease transmission need to be controlled, otherwise India will enter in stage-3 of SARS-CoV-2 disease transmission within in a short period of time.

Based on the estimated model parameters we have validated our detailed analytical findings. Our proposed SAIUQR model has two biologically feasible singular points, namely infection free steady state $E^0$ and a unique endemic steady state $E^*$ and they become locally asymptotically stable for $\mathcal{R}_{0} < 1$ and $\mathcal{R}_{0} > 1$, respectively. Analytically, we have shown that the infection free steady state $E^0$ of the SAIUQR model (\ref{stateeq}) is globally asymptotically stable for $\mathcal{R}_{0} < 1.$ We also showed that the SAIUQR model (\ref{stateeq}) experiences transcritical bifurcation at the threshold parameter $\mathcal{R}_{0} = 1$, which has been shown in the Figure \ref{F:BF}. The Figure \ref{F:LAS} (blue curves) and the Figure \ref{F:LAS} (red curves) represents the local asymptotic stability as well as global asymptotic stability of the infection free steady state $E^0$ for $\mathcal{R}_{0} < 1$ and endemic steady state $E^*$ for $\mathcal{R}_{0} > 1$, respectively.

The calibrated model then utilized for short-term predictions in the four different states of India. Our SAIUQR model performs well in case of all the four provinces of India, namely Jharkahnd, Chandigarh, Gujarat,  and Andhra Pradesh for daily confirmed cases and cumulative confirmed cases. Albeit, the increasing (or exponential) pattern of daily new cases and cumulative confirmed cases of SARS-CoV-2 is well captured by our proposed model for all the four states of India, which has been shown in the Figure \ref{F:DataFit}. Our model simulation showed a short-term prediction for 20 days (from May 25, 2020 to June 13, 2020) for daily confirmed cases and cumulative confirmed cases of the four provinces of India.
The short-term prediction for the four provinces of India will demonstrates the increasing pattern of the daily and cumulative cases in the near future (see the Figure \ref{F:Pred}). From the simulation, our model predict that on June 13, 2020 the daily confirmed cases of COVID-19 of the four provinces of India, namely Jharkhand, Gujarat, Chandigarh, and Andhra Pradesh will be 15, 454, 12, and 96, respectively  (see the left column of the Figure \ref{F:Pred}). Similarly, from the simulation, our model predict that on June 13, 2020 the cumulative confirmed cases of COVID-19 of the four provinces of India, namely Jharkhand, Gujarat, Chandigarh, and Andhra Pradesh will be 661, 23955, 514, and 4487, respectively  (see the right column of the Figure \ref{F:Pred}).

It is worthy to mention that the scientists or clinicians are working for effective vaccine or therapeutics to eradicate and/or control the outbreak of SARS-CoV-2 diseases and the existence of such pharmaceutical interventions will substantially change the outcomes \cite{Corey20,Lurie20}. Thus, in this study, we are mainly focusing on short-term predictions for the  COVID-19 pandemic and subsequently there would be a very little chance to alter in corresponding parametric space. But the framework of our present compartmental model provides some significant insights into the dynamics and forecasting of the spread and control of COVID-19. Moreover, our model simulation suggest that the quarantine, reported and unreported symptomatic individuals as well as government intervention polices like media effect, lockdown and social distancing can play a key role in mitigating the transmission of COVID-19.

\section*{Acknowledgments}
This study is supported by Science and Engineering Research Board (SERB) (File No. ECR/ 2017/000234), Department of Science \& Technology, Government of India. The authors are thankful to the anonymous reviewers for their careful reading and constructive suggestions/comments which helped in better exposition of the manuscript.

\section*{Data availability}
All the data used in this work has been obtained from official sources \cite{covid-19tr}. All data supporting the findings of this study are in the paper and available from the corresponding author on request.

\section*{Author contributions statement}
Subhas Khajanchi and Kankan Sarkar  designed and performed the research as well as wrote the paper. \\

\section*{Competing interests}
The authors declare that they have no conflict of interest.

\appendix
\section{Proof of Theorem \ref{Postivity}}
\label{PositivityProof}
\textbf{Proof.} To prove the positivity of the system (\ref{stateeq}), we show that any solution initiating from the non-negative octant $\mathbf{R}_{+}^{6}$ remains positive for all $t > 0$. In order to do this, we have to prove that on each hyperplane bounding the non-negative octant the vector field points into $\mathbf{R}_{+}^{6}$. For our system  (\ref{stateeq}), we observe that
\begin{eqnarray*}
	\frac{dS}{dt}\bigg|_{S=0} &=&  \Lambda_{s} + \rho_s \gamma_q Q ~\geq~ 0,\\
	\frac{dA}{dt}\bigg|_{A=0} &=&  \beta_{s} S \bigg( \alpha_i \frac{I}{N} + \alpha_u \frac{U}{N} \bigg) ~\geq~ 0,\\
	\frac{dI}{dt}\bigg|_{I=0} &=&  \theta \gamma_{a} A + (1-\rho_s)\gamma_q Q ~\geq~ 0,\\
	\frac{dU}{dt}\bigg|_{U=0} &=&  (1 - \theta ) \gamma_a A~\geq~ 0,   \\
	\frac{dQ}{dt}\bigg|_{Q=0} &=& \xi_{a} A ~\geq~ 0,\\
	\frac{dR}{dt}\bigg|_{R=0} &=&  \eta_u U + \eta_i I + \eta_a A ~\geq~ 0.
\end{eqnarray*}

Therefore, the positivity of the solutions starting in the interior of $\mathbf{R}_{+}^{6}$ is assured. $\mathbf{R}_{+}^{6}$ is positively invariant set of the SAIUQR model system (\ref{stateeq}).	

\section{Proof of Theorem \ref{Bounded}}
\label{BoundedProof}
\textbf{Proof.} To prove the boundedness of the SAIUQR system (\ref{stateeq}), we add all the model equations, which gives $N = S + A + I + U + Q + R$. Taking the differentiation gives
\begin{eqnarray*}
	\frac{dN}{dt} &=& \Lambda_{s}- \delta N,
\end{eqnarray*}
which gives
\begin{eqnarray*}
	\limsup_{t \rightarrow \infty} N(t) & \leq & \frac{\Lambda_{s}}{\delta}.
\end{eqnarray*}
Without any loss of generality, we can assume that $\limsup_{t \rightarrow \infty} S(t) \leq \frac{\Lambda_{s}}{\delta},$ $\limsup_{t \rightarrow \infty} A(t) \leq \frac{\Lambda_{s}}{\delta},$ $\limsup_{t \rightarrow \infty} I(t) \leq \frac{\Lambda_{s}}{\delta},$ $\limsup_{t \rightarrow \infty} U(t) \leq \frac{\Lambda_{s}}{\delta},$ $\limsup_{t \rightarrow \infty} Q(t) \leq \frac{\Lambda_{s}}{\delta}$, and $\limsup_{t \rightarrow \infty} R(t) \leq \frac{\Lambda_{s}}{\delta}.$ Thus, we have a bounded set
\begin{eqnarray*}
	\Omega & =& \bigg\{ (S, A, I, U, Q, R) \in \mathbf{R}_{+}^{6} : 0 \leq S, A, I, U, Q, R \leq \frac{\Lambda_{s}}{\delta} \bigg\},
\end{eqnarray*}

\noindent which is also a positively invariant set with respect to the SAIUQR model (\ref{stateeq}). Thus, any solution trajectory starting from an interior point of $\mathbf{R}_{+}^{6}$ ultimately enter into the region $\Omega$ and remains there for all finite time. This results indicates that none of the individuals grow unboundedly or exponentially for a finite time window.

\section{Proof of Theorem \ref{LASE0}}
\label{LASE0Proof}
\textbf{Proof.}	The variational matrix around the infection free steady state $E^{0}$ for the SAIUQR model system (\ref{stateeq}) is given by
	\begin{widetext}
		\begin{eqnarray*}
			J_{E^0} &=& \left( \begin{array}{cccccc}
				-\delta & -\beta_s\alpha_a & -\beta_s\alpha_i & -\beta_s\alpha_u & \rho_s\gamma_q & 0 \\
				0 & \beta_s\alpha_a - (\xi_a+\gamma_a+\eta_a+\delta) & \beta_s\alpha_i & \beta_s\alpha_u & 0 & 0 \\
				0 & \theta\gamma_a & -(\eta_i+\delta) & 0 & (1-\rho_s)\gamma_q & 0 \\
				0 & (1-\theta)\gamma_a & 0 & -(\eta_u+\delta) & 0 & 0 \\
				0 & \xi_a & 0 & 0 & -(\gamma_q+\delta) & 0 \\
				0 & \eta_a & \eta_i & \eta_u & 0 & -\delta
			\end{array}\right).
		\end{eqnarray*}
	\end{widetext}
The above Jacobian matrix $J_{E^0}$ has two repeated eigenvalues which are -$\delta$, while the other four eigenvalues are the roots of the following characteristics equation $|J_{E^0} - \lambda I| = 0$,
	\begin{eqnarray*}
		(\xi_a &+& \gamma_a + \eta_a + \delta + \lambda)(\eta_i + \delta + \lambda)(\eta_u + \delta + \lambda)(\gamma_q + \delta + \lambda)\\
		&-& \beta_s \alpha_a (\eta_i + \delta + \lambda)(\eta_u + \delta + \lambda)(\gamma_q + \delta + \lambda)\\
		&-& \theta \gamma_a \beta_s \alpha_i  (\eta_u + \delta + \lambda)(\gamma_q + \delta + \lambda)\\
		&-& (1-\theta)\gamma_a \beta_s\alpha_u (\eta_i + \delta + \lambda)(\gamma_q + \delta + \lambda)\\
		&-& (1-\rho_s)\gamma_q \xi_a \beta_s \alpha_i (\eta_u + \delta + \lambda) ~=~0,
	\end{eqnarray*}
	which can be rewritten in the following form:
	\begin{eqnarray*}
		&~&\frac{\beta_s \alpha_a}{\xi_a + \gamma_a + \eta_a + \delta + \lambda}  +  \frac{\theta \gamma_a \beta_s \alpha_i}{(\xi_a + \gamma_a + \eta_a + \delta + \lambda)(\eta_i + \delta + \lambda)}\\
		&+& \frac{(1-\theta)\gamma_a \beta_s\alpha_u}{(\xi_a + \gamma_a + \eta_a + \delta + \lambda)(\eta_u + \delta + \lambda)}\\
		&+& \frac{(1-\rho_s)\gamma_q \xi_a \beta_s \alpha_i}{(\xi_a + \gamma_a + \eta_a + \delta + \lambda)(\eta_i + \delta + \lambda)(\gamma_q + \delta + \lambda)} ~=~ 1.
	\end{eqnarray*}
	Denote the above expression as the following:
	\begin{eqnarray*}
		m_1 (\lambda) &=& \frac{\beta_s \alpha_a}{\xi_a + \gamma_a + \eta_a + \delta + \lambda}\\
		&+&  \frac{\theta \gamma_a \beta_s \alpha_i}{(\xi_a + \gamma_a + \eta_a + \delta + \lambda)(\eta_i + \delta + \lambda)}\\
		&+& \frac{(1-\theta)\gamma_a \beta_s\alpha_u}{(\xi_a + \gamma_a + \eta_a + \delta + \lambda)(\eta_u + \delta + \lambda)}  \\
		&+& \frac{(1-\rho_s)\gamma_q \xi_a \beta_s \alpha_i}{(\xi_a + \gamma_a + \eta_a + \delta + \lambda)(\eta_i + \delta + \lambda)(\gamma_q + \delta + \lambda)},  \\[0.2cm]
		&=& m_{11} (\lambda) + m_{22}(\lambda) + m_{33}(\lambda) + m_{44}(\lambda)~~~\textrm{(say)}.
	\end{eqnarray*}
	Substitute $\lambda = x + i y$, and we know that $Re(\lambda) \geq 0$, then the above expression leads to
	\begin{eqnarray*}
		| m_{11} (\lambda) | &\leq & \frac{\beta_s \alpha_a}{| \xi_a + \gamma_a + \eta_a + \delta + \lambda |} ~\leq~ m_{11} (x)\\
		~&\leq&~ m_{11} (0),   \\
		| m_{22} (\lambda) | &\leq & \frac{\theta \gamma_a \beta_s \alpha_i}{| \xi_a + \gamma_a + \eta_a + \delta + \lambda | |\eta_i + \delta + \lambda|} ~\leq~ m_{22} (x)\\
		~&\leq&~ m_{22} (0),   \\
		| m_{33} (\lambda) | &\leq & \frac{(1-\theta)\gamma_a \beta_s\alpha_u}{|\xi_a + \gamma_a + \eta_a + \delta + \lambda| |\eta_u + \delta + \lambda|}\\
		~&\leq&~ m_{33} (x) ~\leq~ m_{33} (0),   \\
		| m_{44} (\lambda) | &\leq & \frac{(1-\rho_s)\gamma_q \xi_a \beta_s \alpha_i}{|\xi_a + \gamma_a + \eta_a + \delta + \lambda| |\eta_i + \delta + \lambda| |\gamma_q + \delta + \lambda|}\\
		~&\leq&~ m_{44} (x) ~\leq~ m_{44} (0).
	\end{eqnarray*}
	Thus, $m_{11}(0) + m_{22}(0) +  m_{33}(0) +  m_{44}(0) ~=~ m_{1}(0) ~=~ \mathcal{R}_0 < 1,$ which gives that $m_{1} (\lambda) \leq 1.$ Therefore for $\mathcal{R}_0 < 1$, all the eigenvalues of the characteristics equation $m_1 (\lambda) = 1$ are real or have negative real parts. Thus for $\mathcal{R}_0 < 1$, all the eigenvalues are negative and hence the infection free steady state $E^0$ is locally asymptotically stable.    \\
	
	\noindent Now, if we consider $\mathcal{R}_0 > 1$, that is, $m_1 (0) > 1$, then
	\begin{eqnarray*}
		\lim_{\lambda \rightarrow \infty}  m_{1} (\lambda) &=& 0,
	\end{eqnarray*}
	then there exists $\lambda_{1}^{*} > 0$ in such a way that $ m_{1} (\lambda_{1}^{*}) = 1.$ This indicates that there exists non-negative eigenvalue $\lambda_{1}^{*} > 0$ for the variational matrix $J_{E^0}$. Hence, the infection free steady state $E^0$  is unstable for  $\mathcal{R}_0 > 1$.

\section{Proof of Theorem \ref{GASE0}}
\label{GASE0Proof}
\textbf{Proof.}	To prove the global asymptotic stability of the infection free steady state $E^{0}(S^{0}, A^{0}, Q^{0}, I^{0}, U^{0}, R^{0})$, we can rewrite the SAIUQR model (\ref{stateeq}) in the following form:
	\begin{eqnarray*}
		\frac{dX}{dt} &=& F(X, V),    \\
		\frac{dV}{dt} &=& G(X, V),~~~G(X, 0) = 0,
	\end{eqnarray*}
	where $X = (S, R) \in \mathbf{R}^{2}$ represents (its components) the number of susceptible or uninfected individuals and $V = (A, I, U, Q) \in \mathbf{R}^{4}$ represents (its components) the number of infected individuals incorporating asymptomatic, quarantine, and infectious etc. $E^{0} = (X^*, 0)$ designates the infection free steady state for the SAIUQR model system (\ref{stateeq}). For the compartmental model (\ref{stateeq}), $F(X, V)$ and $G(X, V)$ are defined as follows:
	
	\begin{eqnarray*}
		F(X, V) &=& \left(
		\begin{array}{c}
			\Lambda_{s} - \beta_s \frac{S}{N} ( \alpha_a A + \alpha_i I + \alpha_u U) + \rho_{s} \gamma_q Q - \delta S \\
			\eta_u U + \eta_{i} I + \eta_{a} A - \delta R \\
		\end{array}
		\right),\\  ~~\textrm{and}~~\\
		G(X, V) &=& \left(
		\begin{array}{c}
			\beta_s \frac{S}{N} ( \alpha_a A + \alpha_i I + \alpha_u U) - (\xi_a + \gamma_a) A - \eta_a A - \delta A \\
			\theta \gamma_a A + (1-\rho_s) \gamma_q Q - \eta_i I - \delta I \\
			(1 - \theta) \gamma_{a} A - \eta_u U - \delta U \\
			\xi_a A - \gamma_q Q - \delta Q  \\
		\end{array}
		\right).
	\end{eqnarray*}
	From the above expression of $G(X, V)$, it is clear that $G(X, 0) = 0.$
	
	The following two conditions $(C1)$ and $(C2)$ must be met to assure the global asymptotic stability:   \\
	~~~~~~~~(C1) For $\frac{dX}{dt} = F(X, 0),$ $X^*$ is globally asymptotically stable,   \\
	~~~~~~~~(C2) $G(X, V) = BV - \widehat{G}(X, V),$ $\widehat{G}(X, V) \geq 0$ for $(X, V) \in \Omega$,   \\
	where $B = D_{I}G(X^{*}, 0)$ is an M-matrix (the non-diagonal components are non-negative) and in the region $\Omega$, the SAIUQR model system (\ref{stateeq}) is biologically feasible. The compartmental model (\ref{stateeq}) stated in the condition $(C1)$ can be expressed as
	\begin{eqnarray*}
		\frac{d}{dt} \left(
		\begin{array}{c}
			S \\
			R \\
		\end{array}
		\right)~=~\left(
		\begin{array}{c}
			\Lambda_{s} - \delta S \\
			- \delta R \\
		\end{array}
		\right).
	\end{eqnarray*}
	Analytically solve the above system of equations, we obtain that $S(t) = \frac{\Lambda_{s}}{\delta} + \exp(-\delta t) (S(0) - \frac{\Lambda_{s}}{\delta}),$ and $R(t) = \exp(-\delta t) R(0).$ Considering $t \rightarrow \infty$, $S(t) = \frac{\Lambda_{s}}{\delta}$ and $R(t) \rightarrow 0.$ Thus, $X^*$ is globally asymptotically stable for $\frac{dX}{dt} = F(X, 0).$ Thus, the first condition $(C1)$ holds for the system (\ref{stateeq}).  \\
	
	Now the matrices $B$ and $\widehat{G}(X, V)$ for the SAIUQR model system (\ref{stateeq}) can be expressed as
	\begin{widetext}
		\begin{eqnarray*}
			B &=& \left(
			\begin{array}{cccc}
				-(\xi_a +\gamma_a + \eta_a + \delta) + \alpha_a \beta_s & \alpha_i\beta_s & \alpha_u\beta_s & 0 \\
				\theta\gamma_a & -(\eta_i+\delta) & 0 & (1-\rho_s)\gamma_q \\
				(1-\theta)\gamma_a & 0 & -(\eta_u +\delta) & 0 \\
				\xi_a & 0 & 0 & -(\gamma_q+\delta) \\
			\end{array}
			\right),    \\
			\widehat{G}(X, V) &=& \left(
			\begin{array}{c}
				\alpha_a\beta_s A \bigg(1-\frac{S}{N}\bigg) + \alpha_i\beta_s I \bigg(1-\frac{S}{N}\bigg) + \alpha_u\beta_s U \bigg(1-\frac{S}{N}\bigg)\\
				0 \\
				0 \\
				0 \\
			\end{array}
			\right).
		\end{eqnarray*}
	\end{widetext}
	
	It is clear that $B$ is a M-matrix as all its non-diagonal components are non-negative. Also, $\widehat{G}(X, V) \geq 0$ in the region $\Omega$ as $S(t) \leq N(t).$ Also, we showed that $X^*$ is a globally asymptotically stable steady state of the system $\frac{dX}{dt} = F(X,0)$. Therefore, the infection free steady state $E^0$ of the SAIURQ model (\ref{stateeq})  is globally asymptotically stable in the region $\Omega$ for $\mathcal{R}_0 < 1.$

\section{Proof of Theorem \ref{trans-EEs}}
\label{trans-EEsProof}
\textbf{Proof.}	Now, we use the theory of center manifold to investigate the local asymptotic stability of the interior equilibrium point $E^{*}(S^{*}, A^{*}, Q^{*}, I^{*}, U^{*}, R^{*})$ by considering the disease transmission rate $\beta_s$ as a bifurcation parameter, $\beta_s = \beta_s^{c}$ corresponding to $\mathcal{R}_0 = 1,$ is
	\begin{widetext}
		\begin{eqnarray*}
			\beta_s^{c} &=& \frac{(\eta_u+\delta)(\gamma_q+\delta)(\eta_i+\delta)(\xi_a+\gamma_a+\eta_a+\delta)}{\alpha_a (\eta_u+\delta)(\gamma_q+\delta)(\eta_i+\delta) + (1-\theta)\alpha_u\gamma_a (\gamma_q+\delta)(\eta_i+\delta) + \alpha_i \left\{ \theta\gamma_a(\gamma_q+\delta) +(1-\rho_s)\xi_a\gamma_a \right\} (\eta_u+\delta) }.
		\end{eqnarray*}
	\end{widetext}
	
	The variational matrix of the SAIUQR model (\ref{stateeq}) at $\beta_s = \beta_s^{c}$, denoted by $J_{E^0}$ has the right eigenvector associated to zero is eigenvalue given by $\omega = [ \omega_1, ~\omega_2, ~\omega_3, ~\omega_4, ~\omega_5, ~\omega_6 ]^{T}$, where
	\begin{eqnarray*}
		\omega_1 &=& \frac{\omega_2}{\delta} \bigg[ - \frac{(1-\theta)\gamma_a\beta_s\alpha_u}{\eta_u+\delta} + \frac{\rho_s\gamma_q\xi_a}{\eta_q+\delta} - (\xi_a+\gamma_a+\eta_a+\delta)\\
		&+& \frac{\beta_s\alpha_u(1-\theta)\gamma_a}{\eta_u+\delta}  \bigg],   \\
		\omega_2 &=& \omega_2 ~>~0,~~~ \omega_4 ~=~ \frac{(1-\theta)\gamma_a \omega_2}{\eta_u+\delta},~~~ \omega_5 ~=~ \frac{\xi_a \omega_2}{\eta_q+\delta},    \\
		\omega_3 &=& \frac{\omega_2}{\beta_s\alpha_i} \bigg[ (\xi_a+\gamma_a+\eta_a+\delta) - \beta_s\alpha_a - \frac{\beta_s\alpha_u(1-\theta)\gamma_a}{\eta_u+\delta}  \bigg],   \\
		\omega_6 &=& \frac{\omega_2}{\delta} \bigg[ \eta_a + \frac{\eta_u(1-\theta)\gamma_a}{\eta_u+\delta} + \frac{\eta_i ((\xi_a+\gamma_a+\eta_a+\delta)-\beta_s\alpha_a)}{\beta_s\alpha_i}\\
		&-& \frac{\eta_i\beta_s\alpha_u(1-\theta)\gamma_a}{\beta_s\alpha_i(\eta_u+\delta)} \bigg].
	\end{eqnarray*}
	
	Similarly, at the threshold $\beta_s = \beta_s^{c}$, the variational matrix $J_{E^0}$ has the left eigenvector associated to zero eigenvalue is given by $\upsilon = [ \upsilon_1, ~\upsilon_2, ~\upsilon_3, ~\upsilon_4, ~\upsilon_5, ~\upsilon_6 ]$, where
	\begin{eqnarray*}
		\upsilon_1 &=& 0,~~~\upsilon_6 = 0,~~~ \upsilon_2 = \upsilon_2 ~>~0,\\
		~~~\upsilon_3 &=& \frac{\beta_s\alpha_i \upsilon_2}{\eta_i+\delta},~~~\upsilon_4 = \frac{\beta_s\alpha_u \upsilon_2}{\eta_u+\delta}, \\[0.2cm]
		\upsilon_5 &=& \frac{\upsilon_2}{\xi_a} \bigg[ (\xi_a+\gamma_a+\eta_a+\delta) - \beta_s\alpha_a - \frac{\theta\gamma_a\beta_s\alpha_i}{\eta_i+\delta}\\
		&-& \frac{\beta_s\alpha_u(1-\theta)\gamma_a}{\eta_u+\delta}  \bigg].
	\end{eqnarray*}

	Let us introduce the notations for the SAIUQR model system (\ref{stateeq}): $S = x_1$, $A = x_2$, $I = x_3$, $U = x_4$, $Q = x_5$, $R = x_6$,  and $\frac{dx_i}{dt} = f_i $, where $i = 1, 2,..., 6.$  Now, we compute the following nonzero second order partial derivatives of $f_i $ at the infection free steady state $E^{0}$ and obtain
	\begin{eqnarray*}
		\frac{\partial^{2}f_{2}}{\partial x_2 \partial x_3} &=& - \beta_s (\alpha_a + \alpha_i) \frac{\Lambda_s}{\delta},~ \frac{\partial^{2}f_{2}}{\partial x_2 \partial x_4} = - \beta_s (\alpha_a + \alpha_u) \frac{\Lambda_s}{\delta},\\
		\frac{\partial^{2}f_{2}}{\partial x_2 \partial x_5} &=& - \beta_s \alpha_a \frac{\Lambda_s}{\delta}, ~~~\frac{\partial^{2}f_{2}}{\partial x_2 \partial x_6} ~=~ - \beta_s \alpha_a \frac{\Lambda_s}{\delta},   \\
		\frac{\partial^{2}f_{2}}{\partial x_3 \partial x_4} &=& - \beta_s (\alpha_i + \alpha_u) \frac{\Lambda_s}{\delta}, ~~~\frac{\partial^{2}f_{2}}{\partial x_3 \partial x_5} ~=~ - \beta_s \alpha_i \frac{\Lambda_s}{\delta},\\
		\frac{\partial^{2}f_{2}}{\partial x_3 \partial x_6} &=& - \beta_s \alpha_i \frac{\Lambda_s}{\delta}, ~~~\frac{\partial^{2}f_{2}}{\partial x_4 \partial x_6} ~=~ - \beta_s \alpha_u \frac{\Lambda_s}{\delta},    \\
		\frac{\partial^{2}f_{2}}{\partial x_4 \partial x_5} &=& - \beta_s \alpha_u \frac{\Lambda_s}{\delta}, ~~\frac{\partial^{2}f_{2}}{\partial x_2 \partial x_2} ~=~ - 2 \beta_s \alpha_a \frac{\Lambda_s}{\delta},\\
		\frac{\partial^{2}f_{2}}{\partial x_3 \partial x_3} &=& - 2 \beta_s \alpha_i \frac{\Lambda_s}{\delta}, ~~~\frac{\partial^{2}f_{2}}{\partial x_4 \partial x_4} ~=~ - 2 \beta_s \alpha_u \frac{\Lambda_s}{\delta}.
	\end{eqnarray*}
	
	The rest of the partial derivatives at the infection free steady state $E^{0}$ remains zero. Now, we compute the coefficients $a$ and $b$ due to the well-known Theorem 4.1 by Castillo-Chavez \& Song \cite{Castillo04} as follows:
	\begin{eqnarray*}
		a &=& \sum_{i, j, k = 1}^{6} \upsilon_{k} \omega_i \omega_j  \frac{\partial^{2}f_{k} (0, 0)}{\partial x_i \partial x_j},
	\end{eqnarray*}
	and
	\begin{eqnarray*}
		b &=& \sum_{i, k = 1}^{6} \upsilon_{k} \omega_i \frac{\partial^{2}f_{k} (0, 0)}{\partial x_i \partial \beta_s}.
	\end{eqnarray*}
	By substituting the values of all the nonzero second-order partial derivatives and the left and right eigenvectors from above analysis at the threshold $\beta_s = \beta_s^{c}$, we have
	\begin{eqnarray*}
		a &=& - \frac{\beta_s \upsilon_{2} \Lambda_s}{\delta} ( \omega_2 \omega_3 (\alpha_a + \alpha_i) + \omega_2 \omega_4 (\alpha_a + \alpha_u)\\
		&+& \omega_2 \omega_5 \alpha_a + \omega_2 \omega_6 \alpha_a + \omega_3 \omega_4 (\alpha_i + \alpha_u) + \omega_3 \omega_5 \alpha_i\\
		&+& \omega_3 \omega_6 \alpha_i + \omega_4 \omega_6 \alpha_u + \omega_4 \omega_5 \alpha_u + 2 \omega_2^{2} \alpha_a\\
		&+& 2 \omega_3^{2} \alpha_i + 2 \omega_4^{2} \alpha_u ),
	\end{eqnarray*}
	and
	\begin{eqnarray*}
		b &=&  \frac{\upsilon_{2} \Lambda_s}{\delta} (\omega_2 \alpha_a + \omega_3 \alpha_i + \omega_4 \alpha_u ).
	\end{eqnarray*}
	
	From the above expressions, it can be observed that $a < 0$ and $b > 0$, therefore due to the Remark 1 of the well-known Theorem 4.1 by Castillo-Chavez \& Song \cite{Castillo04}, and by Khajanchi et al. \cite{Khajanchi18} a transcritical bifurcation occurs at the basic reproduction number $\mathcal{R}_0 = 1$ and the interior equilibrium point $E^*$ is locally asymptotically stable for $\mathcal{R}_0 > 1$.
	
\section*{References}



\appendix~

\end{document}